\setlist[enumerate,1]{label=(\roman*)}
\title{Land and Infinite Debt Rollover}
\author{Tomohiro Hirano\thanks{Department of Economics, Royal Holloway, University of London; Center for Macroeconomics at the London School of Economics; Canon Institute for Global Studies. Email: \href{mailto:tomohiro.hirano@rhul.ac.uk}{tomohiro.hirano@rhul.ac.uk}.} \and Alexis Akira Toda\thanks{Department of Economics, Emory University. Email: \href{mailto:alexis.akira.toda@emory.edu}{alexis.akira.toda@emory.edu}.}}
\numberwithin{equation}{section}
\numberwithin{lem}{section}
\numberwithin{prop}{section}
\newcommand{\cE}{\mathcal{E}}
\newcommand{\idr}{infinite debt rollover\xspace}
\begin{document}
\maketitle

\begin{abstract}
Since \citet{McCallum1987}, it is well known that in an overlapping generations (OLG) economy with land, the equilibrium is Pareto efficient because with balanced growth, the interest rate exceeds the economic growth rate ($R>G$), which rules out infinite debt rollover (a Ponzi scheme). We show that once we remove knife-edge restrictions on the production function and allow unbalanced growth, under some conditions an efficient equilibrium with land bubbles necessarily emerges and infinite debt rollover becomes possible, which is a markedly different insight from the conventional view derived from the \citet{Diamond1965} landless economy. We also examine the possibility of Pareto inefficient equilibria.

\medskip

\noindent
\textbf{Keywords:} infinite debt rollover, land, necessity of land bubble, Pareto (in)efficiency, Ponzi scheme, unbalanced growth.
		
\medskip

\noindent
\textbf{JEL codes:} D53, D61, E60, G12.
\end{abstract}

\section{Introduction}

In recent years, there has been a heated debate in academic and policy circles about the sustainability of public debt, in particular, the possibility of \idr (Ponzi scheme) in a low interest rate environment \citep{Blanchard2019,MianStraubSufi2022,BallMankiw2023,Kocherlakota2023ECMA,BloiseVailakis2024,AbelPanageasECMA}. The theoretical basis for such arguments is the influential paper by \citet{Diamond1965}. He considers an overlapping generations (OLG) economy with capital and labor and shows that, under certain conditions, the oversaving problem ($R<G$, where $R$ is the interest rate and $G$ is the economic growth rate) can arise, producing Pareto inefficient equilibria. In such a circumstance, the perpetual rollover of public debt may eliminate the inefficient equilibrium, thus restoring Pareto efficiency.

However, these recent papers and the policy debate miss a critical point raised a long time ago by \citet{McCallum1987} and \citet{Homburg1991} that in an OLG economy with a productive, non-reproducible, and non-depreciating asset like land, such inefficiencies with $R<G$ cannot arise in the first place. The existence of land resolves the oversaving problem and the equilibrium with land is Pareto efficient with $R>G$, implying that the presence of land dramatically changes the condition of $R$ versus $G$. The intuition is that, along a balanced growth path, the land price and rent grow at the same rate as the output, which necessarily makes the rate of return (interest rate) higher than the growth rate of the economy ($R>G$).\footnote{This is not limited to OLG models with land: the same intuition applies to models with infinitely-lived agents, \ie, $R>G$ holds on a balanced growth path. Otherwise, land prices would be infinite, which cannot be an equilibrium.} In fact, an even earlier paper by \citet{Nichols1970} states:
\begin{quote}
    Because the price of land must increase proportionately at the growth rate of output while the rental yield is also positive, the rate of return on land must exceed the growth rate. [\ldots] Oversaving or inefficient accumulation of excess capital in the sense of Phelps is shown to be impossible. (p.~339)
\end{quote}
If this view is correct, the debate on \idr would be moot in the real world where land is present, because $R>G$ would make such a Ponzi scheme impossible. 

The purpose of this paper is to study the efficiency properties of an OLG economy with land and examine whether sustainable \idr, \ie, a Ponzi scheme, is possible or not in the first place. There are three messages to be drawn from our analyses.
\begin{enumerate*}
    \item The central message is that one cannot ignore the presence of land in analyzing infinite debt rollover: the mere existence of land fundamentally changes the economic conditions under which sustainable \idr is possible and produces a markedly different insight from the conventional view derived from the \citet{Diamond1965} landless economy. That is, in an economy with land, if the economy falls into a sufficiently low interest rate environment, land price bubbles necessarily emerge in equilibrium, and Pareto efficiency is simultaneously achieved. Contrary to common understanding, despite Pareto efficiency, \idr becomes possible. 
    \item Unlike \citet{McCallum1987}'s claim, even in an OLG model with land, a Pareto inefficient equilibrium may arise. In addition, there can also exist a Pareto efficient equilibrium in which land price bubbles arise.
    \item In the economy with land, the dynamic path with \idr requires unbalanced growth where the productivities of different production factors grow at different rates. This is because with land, $R>G$ holds with balanced growth, which makes \idr impossible.  
\end{enumerate*}

We derive these findings with the simplest possible two-period overlapping generations model. The only departure from the standard models including the seminal papers by \citet{Samuelson1958} and \citet{Diamond1965} is the introduction of land, which is a productive, non-reproducible, and non-depreciating asset. In each period, agents arrive and live for two periods (young and old age). Agents are endowed with certain amounts of the consumption good when young and old, which captures the relative income levels over the life cycle. There is land in fixed supply, which produces some output but also serves as a means of savings. There is no aggregate uncertainty. This is all that is required to obtain our main results. Even with this simplest setting, the mere existence of land leads to a markedly different insight. That is, under certain conditions concerning the relative income levels, the interest rate becomes sufficiently low. Once the economy falls into such a circumstance, the only possible equilibrium is one that features land price bubbles. In other words, in an environment with a sufficiently low interest rate, land price bubbles are necessary for the existence of equilibrium, and Pareto efficiency is simultaneously achieved. Despite Pareto efficiency, this situation makes \idr possible. This insight can only be obtained by considering the existence of land. We establish this result as Theorem \ref{thm:necessity}, with an example in Proposition \ref{prop:necessity} where labor and land are inputs for production and the production function takes the constant elasticity of substitution (CES) form.

Moreover, we also study the possibility of Pareto inefficient equilibrium in an OLG economy with land and show that \citet{McCallum1987}'s result also needs a revision. To illustrate this point, we employ the CES production function and assume that the growth rate of labor productivity (which can be interpreted as including the population growth rate) is higher than that of land productivity. We also assume that the elasticity of substitution between land and non-land factors is greater than one, which is empirically supported (Footnote \ref{fn:ES}). In this case, under certain conditions on the relative income levels in young and old periods, interest rates will not become low enough to generate
the inevitable emergence of land bubbles, but they may fall enough so that $R<G$. More precisely, while the land price grows at the same rate as land rents, aggregate endowments grow faster. As a result, in the long run, the land price becomes negligible relative to aggregate endowments, so the long-run equilibrium asymptotically behaves like a landless economy as if no trade in land takes place. If the economy falls into such a circumstance, the long-run interest rate becomes lower than the economic growth rate ($R<G$). In addition to such an inefficient equilibrium, there can also exist a Pareto efficient equilibrium in which land bubbles occur. Again, although Pareto efficiency is achieved, \idr is possible. We derive these insights in such a way that goes beyond \citet{McCallum1987}, whose analysis turns out to be a knife-edge case of the elasticity of substitution being just equal to one and/or of equal productivity growth rates across different production factors, \eg, in our model, labor and land have the same productivity growth rate.

Our results not only change the conventional view on sustainable \idr but also have an important implication on macro-theory construction for analyzing it. Unlike the standard analysis of \idr in an economy without land, which typically entails balanced growth, the dynamic path with \idr in an economy with land entails unbalanced growth. Put simply, as long as there is land, which is always the case in reality, and the standard no-arbitrage condition holds, the rate of return on government bonds (the safe interest rate) equals the rate of return on holding land and we obtain $R>G$ on a balanced growth path, regardless of whether the approach is based on an OLG model or a model with infinitely-lived agents. If $R>G$, by model construction, \idr is impossible, even if the circumstance with $R<G$ arises and a Ponzi scheme appears to be possible along a balanced growth path in a model without land: the mere presence of land breaks the result. Therefore, the existence of land cannot be ignored in the analysis of infinite debt rollover, and we should be careful in deriving policy implications of \idr in a model without land.

The rest of the paper is organized as follows. We start the discussion with an endowment economy (exogenous rent) in \S\ref{sec:endowment} because production is not essential to our insight. \S\ref{sec:land} extends the analysis to the production economy (endogenous rent) and presents several examples based on closed-form solutions. \S\ref{sec:discuss} shows that \citet{McCallum1987}'s analysis is knife-edge and discusses the related literature. Proofs are deferred to the Appendix.

\section{Endowment economy}\label{sec:endowment}

In this section, we consider an overlapping generations endowment economy with land. We show that under certain conditions leading to a sufficiently low interest rate, the only possible equilibrium is one that features land price bubbles and Pareto efficiency is simultaneously achieved, which makes \idr (a Ponzi scheme) possible.

\subsection{Model}

We consider a standard two-period overlapping generations (OLG) model without uncertainty. The only departure from \citet{Samuelson1958} and \citet{Diamond1965} is the introduction of land, which is a productive, non-reproducible, and non-depreciating asset. Time is denoted by $t=0,1,\dotsc$.

\paragraph{Demography, preferences, and endowments}

At time $t$, a mass $N_t>0$ of new agents are born, who live for two dates. In addition, at $t=0$ there is a mass $N_{-1}$ of initial old agents.\footnote{\citet{Diamond1965} and \citet{Tirole1985} assume exogenous population growth, while fixing the per capita labor supply (corresponding to endowment here) constant. \citet{HiranoToda2025JPE,HiranoToda2025PNAS} assume exogenous endowment or technology growth, while fixing the population constant. Our setting subsumes both cases.} The utility function of generation $t\ge 0$ is denoted by $U_t(c_t^y,c_{t+1}^o)$, where $c_t^y$ and $c_t^o$ denote the consumption of young and old at time $t$. We assume $U_t:\R_{++}^2\to \R$ is continuously differentiable, quasi-concave, has positive partial derivatives, and satisfies the Inada condition. The initial old care only about their consumption $c_0^o$. At time $t$, each young and old are endowed with $e_t^y>0$ and $e_t^o\ge 0$ units of the consumption good, which capture the relative income levels over the life cycle.

\paragraph{Asset}
 
There is a unit supply of land, which pays exogenous dividend (rent) $r_t\ge 0$ at time $t$ in units of the consumption good. Let $P_t$ be the (endogenous) ex-dividend land price at time $t$. We assume $r_t>0$ infinitely often so that $P_t>0$ for all $t$. At $t=0$, land is initially owned by the old.

\paragraph{Equilibrium}

Generation $t$ maximizes the utility $U_t(c_t^y,c_{t+1}^o)$ subject to the budget constraints
\begin{subequations}\label{eq:budget}
\begin{align}
    &\text{Young:} & c_t^y+P_tx_t&=e_t^y, \label{eq:budget_young} \\
    &\text{Old:} & c_{t+1}^o&=e_{t+1}^o+(P_{t+1}+r_{t+1})x_t, \label{eq:budget_old}
\end{align}
\end{subequations}
where $x_t$ denotes land holdings. A \emph{rational expectations equilibrium} consists of a sequence $\set{(P_t,c_t^y,c_t^o,x_t)}_{t=0}^\infty$ such that
\begin{enumerate*}
    \item $(c_t^y,c_{t+1}^o,x_t)$ solves the utility maximization problem and
    \item the commodity and land markets clear, so
\end{enumerate*}
\begin{subequations}\label{eq:clear}
\begin{align}
    N_tc_t^y+N_{t-1}c_t^o&=N_te_t^y+N_{t-1}e_t^o+r_t, \label{eq:clear_c} \\
    N_tx_t&=1. \label{eq:clear_x}
\end{align}
\end{subequations}

Under the maintained assumptions, an equilibrium always exists.\footnote{See, for example, \citet[Proposition 3.10]{BalaskoShell1980} without the asset and \citet[Theorem 1]{HiranoToda2025JPE} with the asset. These papers assume a constant population, but it is straightforward to adapt the proof for an arbitrary population.} Using the budget constraints \eqref{eq:budget} to eliminate $c_t^y,c_{t+1}^o$ from the utility function, the individual optimization reduces to maximizing
\begin{equation*}
    U_t(e_t^y-P_tx_t,e_{t+1}^o+(P_{t+1}+r_{t+1})x_t)
\end{equation*}
over land holdings $x_t$. The first-order condition for optimality is
\begin{equation}
    -U_{t,1}(y_t,z_{t+1})P_t+U_{t,2}(y_t,z_{t+1})(P_{t+1}+r_{t+1})=0, \label{eq:foc}
\end{equation}
where the subscripts 1, 2 denote partial derivatives with respect to the first and second arguments, and we write $(c_t^y,c_{t+1}^o)=(y_t,z_{t+1})$ to simplify the notation. In equilibrium, the land market clearing condition \eqref{eq:clear_x} implies $x_t=1/N_t$, so
\begin{equation}
    (y_t,z_{t+1})\coloneqq (e_t^y-P_t/N_t,e_{t+1}^o+(P_{t+1}+r_{t+1})/N_t). \label{eq:yz1}
\end{equation}

\subsection{Inevitable emergence of land bubbles}

In this section, we provide the economic conditions under which land price bubbles inevitably emerge in equilibrium.

From the first-order condition \eqref{eq:foc}, we obtain the gross risk-free rate between time $t$ and $t+1$ as
\begin{equation}
    R_t\coloneqq \frac{P_{t+1}+r_{t+1}}{P_t}=\frac{U_{t,1}}{U_{t,2}}(y_t,z_{t+1}). \label{eq:Rt1}
\end{equation}
Let $q_t>0$ be the date-0 price (Arrow-Debreu price) of the date-$t$ good defined by $q_0=1$ and $q_t=1/\prod_{s=0}^{t-1} R_s$ for $t\ge 1$. We define the \emph{fundamental value} of land at time $t$ by the present value of dividends
\begin{equation}
    V_t\coloneqq \frac{1}{q_t}\sum_{s=t+1}^\infty q_sr_s. \label{eq:Vt}
\end{equation}
Appendix \ref{sec:bubble} shows that either $P_t=V_t$ for all $t$ or $P_t>V_t$ for all $t$. If $P_t=V_t$ for all $t$, there is no land bubble and we say that the equilibrium is \emph{fundamental} or \emph{bubbleless}; if $P_t>V_t$ for all $t$, there is a land bubble and we say that the equilibrium is \emph{bubbly}.

Among bubbly equilibria, it is of interest to focus on those in which the land bubble does not vanish in the long run. To make this statement precise, let $\set{a_t}$ be a positive sequence such that
\begin{equation}
    0<\liminf_{t\to\infty}\frac{N_te_t^y+N_{t-1}e_t^o+r_t}{a_t}\le \limsup_{t\to\infty}\frac{N_te_t^y+N_{t-1}e_t^o+r_t}{a_t}<\infty, \label{eq:at}
\end{equation}
that is, the ratio of the aggregate resources $N_te_t^y+N_{t-1}e_t^o+r_t$ to $a_t$ remains bounded above and bounded away from zero. There is no shortage of such sequences, for instance $a_t=N_te_t^y+N_{t-1}e_t^o+r_t$. With such a sequence, we say that a bubbly equilibrium is \emph{asymptotically bubbly} if the detrended asset price is bounded away from zero, or $\liminf_{t\to\infty}P_t/a_t>0$ \citep[Definition 2]{HiranoToda2025JPE}. Clearly, the notion of asymptotically bubbly equilibrium does not depend on the choice of the sequence $\set{a_t}$ as long as it satisfies \eqref{eq:at}. When all equilibria are asymptotically bubbly, we say ``necessity of bubbles'', implying that the emergence of bubbles is inevitable \citep{HiranoToda2025JPE}.

To illustrate the inevitable emergence of land price bubbles in the OLG model with land, we introduce several assumptions.

\begin{asmp}\label{asmp:U}
The utility function $U_t(c^y,c^o)=U(c^y,c^o)$ does not depend on time, has positive partial derivatives, and is differentiably strictly quasi-concave. Furthermore, $c\gg 0$ and $U(c')=U(c)$ implies $c'\gg 0$, \ie, indifference curves do not touch the boundary of $\R_+^2$.
\end{asmp}

The utility function being ``differentiably strictly quasi-concave'' means that indifference curves are strictly convex (see Appendix \ref{sec:efficient}). We next assume population grows at a constant rate and per capita endowments converge.

\begin{asmp}\label{asmp:G}
 Population satisfies $N_t=G^t$ for some $G>0$. Endowments satisfy $\lim_{t\to\infty}(e_t^y,e_t^o)=(e^y,e^o)$ for some $e^y,e^o\ge 0$. 
\end{asmp}

The following theorem shows that if the dividend growth rate exceeds the natural interest rate but is below the economic growth rate, then land price bubbles necessarily emerge and the economy is Pareto efficient.

\begin{thm}\label{thm:necessity}
If Assumptions \ref{asmp:U}, \ref{asmp:G} hold, $e^y>0$, and
\begin{equation}
    R\coloneqq \frac{U_1}{U_2}(e^y,e^o)<G_r\coloneqq \limsup_{t\to\infty}r_t^{1/t}<G, \label{eq:necessity}
\end{equation}
then all equilibria are asymptotically bubbly ($\liminf_{t\to\infty} P_t/G^t>0$) and Pareto efficient.
\end{thm}

In Theorem \ref{thm:necessity}, the necessity of land bubbles directly follows from the Bubble Necessity Theorem of \citet{HiranoToda2025JPE}. If the land price $P_t$ does not grow at the same rate as the economy, then noting that $N_t=G^t$, individual savings $P_t/N_t=P_t/G^t$ converges to zero. Then individual consumption \eqref{eq:yz1} converges to the long run endowment $(e^y,e^o)$, and hence the gross risk-free rate \eqref{eq:Rt1} converges to the natural interest rate $R=(U_1/U_2)(e^y,e^o)$ in \eqref{eq:necessity}. But by assumption $R<G_r$, so the present value of dividends (the fundamental value of land) is infinite, which is impossible. Intuitively, without a land bubble, interest rates become excessively low and lower than the dividend growth rate; under such conditions, land prices would become infinite, which cannot happen in equilibrium. In other words, under \eqref{eq:necessity}, the emergence of land price bubbles is necessary for equilibrium existence.\footnote{Note that Theorem \ref{thm:necessity} does not imply the uniqueness of equilibrium; it merely claims that \emph{all equilibria} of the economy satisfy the stated conditions.}

The interesting and novel part of Theorem \ref{thm:necessity} is Pareto efficiency. It is well known in OLG models with low natural interest rates like \citet{Samuelson1958} and \citet{Tirole1985} that the introduction of money (a pure bubble asset without dividends) can restore efficiency. While Theorem \ref{thm:necessity} somewhat resembles these earlier contributions, it is substantially different. What the earlier literature shows is that the introduction of a pure bubble asset \emph{can} make the equilibrium efficient: it does not rule out inefficient equilibria (\eg, autarky), so bubbles and efficiency are a possibility. In contrast, Theorem \ref{thm:necessity} shows that in the presence of land with dividend growth rate exceeding the natural interest rate, all equilibria \emph{must} be asymptotically bubbly and efficient: bubbles and efficiency are a necessity.

\subsection{Bubble necessity and infinite debt rollover}

So far, we have considered an economy with a single long-lived asset, \ie, land. This is without loss of generality, because in deterministic economies, by the absence of arbitrage, the rate of return on all assets must be equalized, so we can aggregate the market capitalization of all assets into just one and treat the economy as if there were a single asset. Hence, it is straightforward to extend the model to include multiple assets, in which case aggregate bubbles attached to multiple assets emerge as the equilibrium outcome and the claims in Theorem \ref{thm:necessity} stand without change.

We now explicitly introduce public debt and consider the case where the fiscal authority rolls it over indefinitely by issuing new bonds every period. Theorem \ref{thm:necessity} implies that such a Ponzi scheme is possible. The following corollary clarifies this point.

\begin{cor}\label{cor:rollover}
Let $\set{(P_t,c_t^y,c_t^o,x_t)}_{t=0}^\infty$ be an equilibrium of the economy $\cE$ in Theorem \ref{thm:necessity} and let $R_t=(P_{t+1}+r_{t+1})/P_t$ be the corresponding equilibrium interest rate. Let $\theta\in (0,1]$ and $B_t\coloneqq \theta(P_t-V_t)>0$, where $V_t$ is the fundamental value of land \eqref{eq:Vt}, and let $\tilde{\cE}$ be an economy in which the fiscal authority issues debt $B_t$ at time $t$. If for all $t$ we let
\begin{align*}
    \tilde{P}_t&\coloneqq V_t+(1-\theta)(P_t-V_t),\\
    (\tilde{c}_t^y,\tilde{c}_{t+1}^o)&\coloneqq (c_t^y,c_{t+1}^o),\\
    \tilde{x}_t&\coloneqq x_t=1/N_t,
\end{align*}
and $\tilde{c}^o\coloneqq c_0^o-B_0/N_{-1}$, then $\set{(\tilde{P}_t,\tilde{c}_t^y,\tilde{c}_t^o,\tilde{x}_t,B_t)}_{t=0}^\infty$ is an equilibrium of $\tilde{\cE}$ and infinite debt rollover is possible: $B_{t+1}=R_tB_t$ for all $t$.
\end{cor}

\begin{proof}
The proof is immediate from Theorem \ref{thm:necessity}, so we only provide a sketch. By \eqref{eq:Rt1} and the definitions of the date-0 price $q_t$ and the fundamental value \eqref{eq:Vt}, both $P_t$ and $V_t$ satisfy the same no-arbitrage condition
\begin{align*}
    P_t&=\frac{1}{R_t}(P_{t+1}+r_{t+1}),\\
    V_t&=\frac{1}{R_t}(V_{t+1}+r_{t+1}).
\end{align*}
Taking the difference and using the definition of $B_t$, we obtain $B_t=B_{t+1}/R_t$. Hence infinite debt rollover is possible if this is an equilibrium. Clearly, $\tilde{P}_t$ satisfies $\tilde{P}_t+B_t=P_t$ and $\tilde{P}_t=(\tilde{P}_{t+1}+r_{t+1})/R_t$. Therefore, under the interest rate $R_t$, generation $t\ge 0$ optimally demands the entire asset $\tilde{P}_t$ and the public debt $B_t$, resulting in the same consumption $(c_t^y,c_{t+1}^o)$ as in $\mathcal{E}$. Only the consumption of the initial old needs to be modified to account for the issuance of initial public debt $B_0$.
\end{proof}

The intuition for Corollary \ref{cor:rollover} is straightforward. In any bubbly equilibrium, the bubble component of the asset must grow at the rate of interest. Thus, we can arbitrarily divide the sequence of bubbles proportionally in two parts, call one part public debt, and we can construct an observationally equivalent equilibrium. If the equilibrium is asymptotically bubbly as in Theorem \ref{thm:necessity}, then the bubble does not vanish relative to the economy and the fiscal authority is able to sustain deficits indefinitely by rolling over debt.

There is a message to be drawn from our Theorem \ref{thm:necessity} and Corollary \ref{cor:rollover}. That is, the insight we have derived is markedly different from the conventional view derived from \citet{Diamond1965} and the large subsequent literature. The conventional approach considers a hypothetical world in which land does not exist. In such an economy, equilibrium usually exists, regardless of the level of interest rates; if interest rates become sufficiently low and lower than the growth rate of the economy, a Pareto inefficient equilibrium may arise, in which case \idr is possible. Since \citet{Diamond1965}, this has been the standard argument: it is often claimed that the existence of a Pareto inefficient equilibrium and the possibility of \idr are two sides of the same coin. However, the mere existence of land fundamentally changes this view. That is to say, when there is land, which is always the case in reality, if interest rates become sufficiently low and lower than the dividend growth rate, there is no equilibrium without a land price bubble in the first place. The land bubble inevitably emerges, which achieves Pareto efficiency (Theorem \ref{thm:necessity}). Pareto inefficient equilibria do not exist, so dynamic inefficiency is irrelevant to whether \idr is possible or not. Rather, despite Pareto efficiency being achieved, it is precisely the inevitable land bubble that makes \idr possible.\footnote{The same applies to monetary (pure bubble) models. Conventionally, monetary (pure bubble) equilibria exist if and only if the economy has a Pareto inefficient equilibrium in the moneyless economy, \ie, $R<G$. A representative paper is \citet{Tirole1985}, and the large subsequent literature is based on the same argument (see \citet{HiranoToda2024JME} for the development in monetary (pure bubble) models). In contrast, in an economy with land (land cannot be ignored) in which a land price bubble inevitably emerges, the existence of a Pareto inefficient equilibrium is irrelevant for the emergence of land price bubbles.}

\section{Production economy}\label{sec:land}

We present a production economy where Theorem \ref{thm:necessity} and Corollary \ref{cor:rollover} hold. The modern economy is transitioning towards a knowledge-based economy where human capital plays an important role for economic growth. Taking this into account, we consider a simple production economy where human capital (labor) is the engine for economic growth.\footnote{We abstract from capital, as it is not essential for our main results and complicates the analysis by increasing the dimension of the
dynamical system. See \citet*{HiranoJinnaiTodaLeverage}, \citet*[\S6]{HiranoToda2024JME}, and \citet*{HiranoKishiTodaBursting} for macroeconomic models with capital including intangible capital that show the inevitable emergence of asset price bubbles.}

\subsection{Model}\label{subsec:land_model}

As in \S\ref{sec:endowment}, we consider a two-period OLG model. Per capita endowments of the good are denoted by $e_t^y,e_t^o\ge 0$. In addition, each agent is endowed with one unit of labor only when young and earns the wage $w_t$. The aggregate supply of land is $X=1$, which is initially owned by the old. We introduce $e_t^y$ and $e_t^o$ to capture the relative income levels over the life cycle. For instance, if $e_t^o=0$, the young have a strong incentive to save to smooth consumption, which would lead to low interest rates. If $e_t^o$ is large, the young have less incentive to save, so interest rates will be higher. In the examples below, we characterize the equilibrium property with $e_t^y$ and $e_t^o$ together with other underlying parameters of the economy.

To simplify the analysis, we introduce parametric assumptions on preferences and technology.

\begin{asmp}\label{asmp:CRRA}
The utility function $U(c^y,c^o)=u(c^y)+\beta u(c^o)$ is additively separable, where $\beta>0$ governs time preference and the period utility function $u$ exhibits constant relative risk aversion $\gamma$:
\begin{equation}
    u(c)=\begin{cases*}
        \frac{c^{1-\gamma}}{1-\gamma} & if $0<\gamma\neq 1$,\\
        \log c & if $\gamma=1$.
    \end{cases*}\label{eq:CRRA}
\end{equation}
\end{asmp}

As we can see from \S\ref{sec:endowment}, the CRRA assumption is not essential, but it is useful to obtain closed-form solutions.

A representative firm produces the consumption good by using human capital $H$ and land $X$ as inputs for production.

\begin{asmp}\label{asmp:CES}
The production function takes the constant elasticity of substitution (CES) form
\begin{equation}
    F(H,X)=\begin{cases*}
        A\left(\alpha H^{1-1/\sigma}+(1-\alpha)X^{1-1/\sigma}\right)^\frac{1}{1-1/\sigma} & if $0<\sigma\neq 1$,\\
        AH^\alpha X^{1-\alpha} & if $\sigma=1$,
    \end{cases*}\label{eq:CES}
\end{equation}
where $\sigma>0$ is the elasticity of substitution between labor and land and $A>0$, $\alpha\in (0,1)$ are parameters.
\end{asmp}

The CES assumption is not essential: see Appendix \ref{sec:ES} for the general case. Taking partial derivatives, the wage and rental rate are given by
\begin{subequations}\label{eq:wr}
\begin{align}
    w=F_H(H,X)&=A\left(\alpha H^{1-1/\sigma}+(1-\alpha)X^{1-1/\sigma}\right)^\frac{1}{\sigma-1}\alpha H^{-1/\sigma},\\
    r=F_X(H,X)&=A\left(\alpha H^{1-1/\sigma}+(1-\alpha)X^{1-1/\sigma}\right)^\frac{1}{\sigma-1}(1-\alpha) X^{-1/\sigma}.
\end{align}
\end{subequations}
Following \citet{HiranoToda2025PNAS}, assume $\sigma>1$.\footnote{\label{fn:ES}\citet{HiranoToda2025PNAS} cite \citet{EppleGordonSieg2010} and \citet{AhlfeldtMcMillen2014} for empirical evidence that the elasticity of substitution between land and non-land factors for producing real estate is greater than 1. The condition $\sigma>1$ is also natural because the GDP share of land-intensive industries have declined along with economic development \citep[Figs.~1, 2]{HiranoToda2025PNAS}.} As will become clear in Proposition \ref{prop:necessity} and Corollary \ref{cor:CES}, in an OLG economy with land, $\sigma>1$ is essential for \idr. We normalize land supply to $X=1$ and set $H=N_t=G^t$ with $G>1$: $N_t$ now represents labor productivity including population growth. Note that $G^t$ corresponds to the relative productivity between labor and land. Hence, $G>1$ implies that labor productivity grows faster than land productivity, which is natural.

It follows from \eqref{eq:wr}, $G>1$, and $\sigma>1$ that the long-run behavior of wage and rent are given by
\begin{subequations}
\begin{align}
    w_t&=A\left(\alpha G^{(1-1/\sigma)t}+1-\alpha\right)^\frac{1}{\sigma-1}\alpha G^{-t/\sigma} \sim A\alpha^\frac{\sigma}{\sigma-1}\eqqcolon w, \label{eq:wt_CES}\\
    r_t&=A\left(\alpha G^{(1-1/\sigma)t}+1-\alpha\right)^\frac{1}{\sigma-1}(1-\alpha)\sim A\alpha^\frac{1}{\sigma-1}(1-\alpha)G^{t/\sigma}\eqqcolon rG^{t/\sigma}. \label{eq:rt_CES}
\end{align}
\end{subequations}
Using \eqref{eq:rt_CES}, the asymptotic rent growth rate is
\begin{equation}
    G_r\coloneqq \limsup_{t\to\infty} r_t^{1/t}=G^{1/\sigma}. \label{eq:Gr}
\end{equation}

\subsection{Bubble necessity and infinite debt rollover}

Applying Theorem \ref{thm:necessity} and Corollary \ref{cor:rollover}, we obtain the following proposition, with the same intuition as Theorem \ref{thm:necessity}.

\begin{prop}\label{prop:necessity}
Suppose Assumptions \ref{asmp:G}--\ref{asmp:CES} hold with $G,\sigma>1$. Let $w\coloneqq A\alpha^\frac{\sigma}{\sigma-1}$ and $r\coloneqq A\alpha^\frac{1}{\sigma-1}(1-\alpha)$. If
\begin{equation}
    R\coloneqq \frac{1}{\beta}\left(\frac{e^o}{e^y+w}\right)^\gamma<G^{1/\sigma}, \label{eq:necessity_land}
\end{equation}
then all equilibria are asymptotically bubbly and Pareto efficient. Furthermore, infinite debt rollover is possible.
\end{prop}

\begin{proof}
Immediate from Theorem \ref{thm:necessity} and \eqref{eq:Gr}.
\end{proof}

As we can see from \eqref{eq:necessity_land}, the parameter space in which land price bubbles inevitably emerge gets larger as the relative income of the old $e^o/(e^y+w)$ gets smaller. The intuition is that when the relative income of the old is small, the young have a strong incentive to save to smooth consumption over the life cycle, which suppresses the natural interest rate.

To illustrate Proposition \ref{prop:necessity}, we present an example based on a closed-form solution. To construct such an example, we start from the price sequence $\set{P_t}$ and reverse-engineer endowments $\set{(e_t^y,e_t^o)}$. Let $p>0$ and set $P_t=pG^t$. Let $e_t^o=e^o$ be constant. Then the gross risk-free rate \eqref{eq:Rt1} reduces to
\begin{equation}
    \frac{1}{\beta}(z_{t+1}/y_t)^\gamma=G+\frac{r_{t+1}}{p}G^{-t}. \label{eq:Rt3}
\end{equation}
Solving for $y_t$ and using the budget constraint yields
\begin{equation}
    e_t^y=z_{t+1}\left[\beta\left(G+\frac{r_{t+1}}{p}G^{-t}\right)\right]^{-1/\gamma}-w_t+p. \label{eq:eyt2}
\end{equation}
Letting $t\to\infty$, we obtain
\begin{equation}
    e_t^y\to e^y\coloneqq (e^o+Gp)(\beta G)^{-1/\gamma}-w+p, \label{eq:ey2}
\end{equation}
where we note that $z_t\to e^o+Gp$ by \eqref{eq:yz1}, $G>1$, $\sigma>1$, $P_t=pG^t$, and $r_t\sim rG^{t/\sigma}$ by \eqref{eq:rt_CES}. Therefore, we may construct an equilibrium for large enough $t$ if $p$ is sufficiently large. Furthermore, this equilibrium is asymptotically bubbly and Pareto efficient, as the following proposition shows.

\begin{prop}\label{prop:efficient}
Let everything be as in Proposition \ref{prop:necessity}. For any $p>0$, let $P_t=pG^t$ and choose $e^o\ge 0$ so that the right-hand side of \eqref{eq:ey2} is positive, or
\begin{equation}
    e^o>(\beta G)^{1/\gamma}(w-p)-Gp. \label{eq:p}
\end{equation}
Then there exists $t_0 \ge 0$ such that $e_t^y$ in \eqref{eq:eyt2} is positive for $t\ge t_0$. The sequence $\set{(w_t,r_t,P_t,c_t^y,c_t^o,x_t)}_{t=t_0}^\infty$ defined by $(c_t^y,c_{t+1}^o)=(y_t,z_{t+1})$ in \eqref{eq:yz1} and $x_t=1/G^t$ is an equilibrium for the economy starting at $t_0$ with endowment $\set{(e_t^y,e^o)}_{t=t_0}^\infty$, which is asymptotically bubbly and Pareto efficient.

If in addition
\begin{equation}
    e^o<\frac{(\beta G)^{1/\gamma}+G}{(G^{1-1/\sigma})^{1/\gamma}-1}p, \label{eq:necessity2}
\end{equation}
then all equilibria are asymptotically bubbly and Pareto efficient.
\end{prop}

Again, despite the achieving Pareto efficiency, the land bubble makes \idr possible, as Corollary \ref{cor:rollover} demonstrates. Note that by choosing $p$ sufficiently large, namely
\begin{equation*}
    p\ge \frac{(\beta G)^{1/\gamma}}{(\beta G)^{1/\gamma}+G}w,
\end{equation*}
the right-hand side of \eqref{eq:p} becomes negative, so any $e^o\ge 0$ will satisfy \eqref{eq:p}.

The distinctive feature of a model with the inevitable emergence of land bubbles and \idr is that the dynamics necessarily exhibits unbalanced growth, \ie, given $G>1$ and $\sigma>1$, the productivity growth rates of labor and land are different and hence wages and land rents grow at different rates. Suppose, instead, that $G=1$ or $\sigma=1$. Then wages and land rents will grow at the same rate, achieving balanced growth, where the wage-output and land-output ratios are positive and constant. Since $R>G$, we have $P_t=V_t$, the equilibrium is Pareto efficient, and \idr is impossible. The intuition is that, if $R>G$ but \idr is possible, then debt grows faster than the economy, which is not sustainable. We make this result the following proposition. 

\begin{prop}\label{prop:difficulty}
Suppose Assumptions \ref{asmp:G}--\ref{asmp:CES} hold. If
\begin{enumerate*}
    \item $G=1$ or $\sigma=1$ and
    \item $(e_t^y,e_t^o)/G^{(\alpha-1)t}$ is constant,
\end{enumerate*}
then there exists a unique balanced growth path equilibrium. In this equilibrium, the interest rate exceeds the economic growth rate, land prices reflect fundamentals, Pareto efficiency is achieved, and infinite debt rollover is impossible.
\end{prop}

Proposition \ref{prop:difficulty} implies that in an economy with land, there is a fundamental difficulty in generating a Ponzi scheme with balanced growth. The standard analysis of \idr typically considers an economy without land and focuses on a balanced growth path or a single dynamic path converging to it. However, as \citet{McCallum1987}'s analysis precisely demonstrates, once we consider land, we necessarily obtain $R>G$ on a balanced growth path, which makes \idr impossible. In other words, the mere introduction of land reverses the relationship from $R<G$ to $R>G$, even though in the landless economy $R<G$ could arise on a balanced growth path and a Ponzi scheme appears to be possible. Therefore, in analyzing \idr, one cannot ignore the presence of land, and we should be careful in deriving policy implications of \idr, including quantitative analyses, in a model without land.

\subsection{Pareto inefficient equilibrium }\label{subsec:land_inefficient}

Proposition \ref{prop:necessity} states that, in a low interest rate environment (the endowment of the old is sufficiently small so that \eqref{eq:necessity_land} holds), land bubbles necessarily arise and all equilibria are Pareto efficient. We next provide an example in which a Pareto inefficient equilibrium may occur when this condition is not satisfied.\footnote{See also \citet{AbelMankiwSummersZeckhauser1989} and \citet{GeerolfReassessing}, who empirically examine whether Pareto inefficiency arises or not.}

To construct a Pareto inefficient equilibrium, we start from the price sequence $\set{P_t}$ and reverse-engineer endowments $\set{(e_t^y,e_t^o)}$. Let $p>0$ and set $P_t=pG^{t/\sigma}$, where $\sigma>1$ is the elasticity of substitution in the CES production function \eqref{eq:CES}. Let $e_t^o=e^o$ be constant. Then the gross risk-free rate \eqref{eq:Rt1} reduces to
\begin{equation}
    \frac{1}{\beta}(z_{t+1}/y_t)^\gamma=G^{1/\sigma}\left(1+\frac{r_{t+1}}{p}G^{-\frac{t+1}{\sigma}}\right). \label{eq:Rt2}
\end{equation}
Solving for $y_t$ and using the budget constraints yields
\begin{equation}
    e_t^y=z_{t+1}\left[\beta G^{1/\sigma}\left(1+\frac{r_{t+1}}{p}G^{-\frac{t+1}{\sigma}}\right)\right]^{-1/\gamma}-w_t+pG^{(1/\sigma-1)t}. \label{eq:eyt}
\end{equation}
Letting $t\to\infty$, we obtain
\begin{equation}
    e_t^y\to e^y\coloneqq e^o\left[\beta G^{1/\sigma}(1+r/p)\right]^{-1/\gamma}-w, \label{eq:ey}
\end{equation}
where we note that $z_t\to e^o$ by \eqref{eq:yz1}, $G>1$, $\sigma>1$, $P_t=pG^{t/\sigma}$, and $r_t\sim rG^{t/\sigma}$ by \eqref{eq:rt_CES}. Therefore, we may construct an equilibrium for large enough $t$ if $e^o$ is sufficiently large. The following proposition makes this statement precise and constructs Pareto efficient or inefficient equilibria.

\begin{prop}\label{prop:inefficient}
Let everything be as in Proposition \ref{prop:necessity}. For any $p>0$, let $P_t=pG^{t/\sigma}$. Choose $e^o>0$ so that the right-hand side of \eqref{eq:ey} is positive, or
\begin{equation}
    e^o>w\left[\beta G^{1/\sigma}(1+r/p)\right]^{1/\gamma}. \label{eq:eo}
\end{equation}
Then there exists $t_0 \ge 0$ such that $e_t^y$ in \eqref{eq:eyt} is positive for $t\ge t_0$. The sequence $\set{(w_t,r_t,P_t,c_t^y,c_t^o,x_t)}_{t=t_0}^\infty$ defined by $(c_t^y,c_{t+1}^o)=(y_t,z_{t+1})$ in \eqref{eq:yz1} and $x_t=1/G^t$ is a fundamental equilibrium for the economy starting at $t_0$ with endowment $\set{(e_t^y,e^o)}_{t=t_0}^\infty$. Furthermore, the equilibrium is efficient if and only if
\begin{equation}
    R\coloneqq G^{1/\sigma}\left(1+\frac{r}{p}\right)\ge G\iff p\le \frac{r}{G^{1-1/\sigma}-1}. \label{eq:R>=G}
\end{equation}
\end{prop}

The intuition for the condition $e^o$ satisfying \eqref{eq:eo} is that with $e^o$ large enough, the young do not have a strong incentive to save because they expect large income when old. Then interest rates will not become low enough to generate the inevitable emergence of land bubbles, but they may fall enough so that $R<G$. More precisely, under \eqref{eq:eo}, with a labor productivity growth rate of $G>1$ and a CES production function with elasticity of substitution $\sigma>1$, the wage \eqref{eq:wt_CES} converges and land rent \eqref{eq:rt_CES} and the land price grow at rate $G^{1/\sigma}$. Since $\sigma>1$ implies $G^{1/\sigma}<G$, the land price becomes negligible relative to aggregate endowments, so the equilibrium asymptotically behaves like the landless economy as if no trade in land takes place. Once the economy falls into such a circumstance, the long-term interest rate becomes lower than the economic growth rate, \ie, \eqref{eq:R>=G} fails, and such an equilibrium characterized by $R<G$ emerges. This implies that even in an OLG model with land, a Pareto inefficient equilibrium characterized by low interest rates can arise,\footnote{Take large enough $T$, and for $t\ge T$, tax the young by $\epsilon>0$, which generates a tax revenue of $N_t\epsilon=G^t\epsilon$ at time $t$. If we transfer this tax revenue to the old, each old receives $N_t\epsilon/N_{t-1}=G\epsilon$. This transfer effectively lets the young save at the rate of return $G$, which exceeds the market interest rate $R_t$ (because $R_t$ converges to $R<G$ as $t\to\infty$). Hence the equilibrium is Pareto inefficient. Also, when $R<G$, issuing and rolling over public debt infinitely may restore Pareto efficiency if the initial bond price is set appropriately.} as in \citet{Samuelson1958}, \citet{Diamond1965}, and \citet{Tirole1985}. 

Two remarks are in order. First, such a Pareto inefficient equilibrium may arise in a nonstationary process characterized by unbalanced growth, but not in a stationary state characterized by balanced growth like in ordinary OLG models without land (Proposition \ref{prop:difficulty}). Second, as we discuss later in \S\ref{subsec:discuss_McCallum}, the conclusion of \citet{McCallum1987} itself is correct in that the presence of land can achieve Pareto efficiency, but his analysis turns out to be a knife-edge case.

\section{Discussion}\label{sec:discuss}

In this section, we first argue that \citet{McCallum1987}'s conclusion itself is correct, but the analysis turns out to be a knife-edge case. Then, we study his analysis from a more general perspective and discuss the literature.

\subsection{Comparison with \texorpdfstring{\citet{McCallum1987}}{}} \label{subsec:discuss_McCallum}

Concerning the possibility of a Pareto inefficient equilibrium in an OLG model with land, \citet[p.~333]{McCallum1987} explicitly claims  ``in an economy with land---a nonreproducible, nondepreciating, and productive asset---this possibility [Pareto inefficiency] cannot obtain, for the real exchange value of land can and will be as large as is needed to accommodate desired private saving at an efficient rate of interest''. \citet[p.~333]{McCallum1987} focuses on an equilibrium with steady state growth by stating ``it is assumed that the economy (i.e., [production function] $f$ and [utility function] $u$) is capable of attaining a unique steady state and that its dynamics are such that this steady state will be approached as time passes''. Moreover, \citet[Footnote 20]{McCallum1987} states that the conjectured steady state growth condition holds if the production function is Cobb-Douglas. 

As in \citet{McCallum1987}'s analysis, consider the Cobb-Douglas production function with $\sigma=1$ in \eqref{eq:CES}. Then, as we have shown in Proposition \ref{prop:difficulty}, we have $R>G$ and $P_t=V_t$, and Pareto efficiency is achieved. Therefore, \idr is impossible. His result can also be obtained if we assume that labor and land have the same productivity growth rates ($G=1$ in our setting). However, in either case, it requires a knife-edge restriction. If we deviate from those knife-edge cases and consider the case where labor productivity grows faster than land productivity and $\sigma>1$, we obtain Proposition \ref{prop:inefficient}, \ie, a Pareto inefficient equilibrium can arise.\footnote{\citet[Footnote 19]{OConnellZeldes1988} and \citet[Footnote 6]{AbelMankiwSummersZeckhauser1989} also cast doubt on the robustness of \citet{McCallum1987}'s claim. However, they did not prove the existence of a Pareto inefficient equilibrium.}

In the rest of this section, we revisit the analysis of \citet{McCallum1987} from a more general perspective. On p.~335, \citet{McCallum1987} assumes that the labor share (aggregate labor income divided by output) converges to a positive value as the economy grows. This essentially restricts the production function to be Cobb-Douglas. In fact, we have the following proposition.

\begin{prop}\label{prop:laborshare1}
Let $F(H,X)$ be a neoclassical production function and $f(x)=F(x,1)$. Then the land share $F_X(H,1)/F(H,1)$ converges to $1-\alpha\in (0,1)$ as $H\to\infty$ if and only if there exist a constant $A>0$ and a function $\varepsilon$ with $\lim_{x\to\infty}\varepsilon(x)=0$ such that
\begin{equation}
    f(x)=Ax^\alpha \exp\left(\int_1^x \frac{\varepsilon(t)}{t}\diff t\right). \label{eq:f}
\end{equation}
\end{prop}

\begin{proof}
Since $F_X(x,1)=f(x)-xf'(x)$, the land share converges to $1-\alpha\in (0,1)$ if and only if
\begin{equation*}
    \frac{tf'(t)}{f(t)}=\alpha+\varepsilon(t)
\end{equation*}
with $\varepsilon(t)\to 0$ as $t\to\infty$. Dividing both sides by $t$, integrating from $t=1$ to $t=x$, and taking the exponential, we obtain \eqref{eq:f} with $A=f(1)$.
\end{proof}

Theorem 1.3.1 of \citet{BinghamGoldieTeugels1987} implies that the exponential part in \eqref{eq:f} is slowly varying, so Proposition \ref{prop:laborshare1} implies that the production function must be essentially restricted to Cobb-Douglas with elasticity of substitution being exactly equal to one, which is a knife-edge case.

The following proposition shows that we can recover \citet{McCallum1987}'s result without assuming the convergence of dynamic paths.\footnote{\citet{McCallum1987} assumed $e_t^y=e_t^o=0$ and steady state growth to prove Pareto efficiency. \citet[Theorem, p.~456]{Homburg1991} proves efficiency assuming $e_t^y=e_t^o=0$ and condition \eqref{eq:lb} (which is Equation (8) of \citet{Homburg1991}) by applying Proposition 5.3 of \citet{BalaskoShell1980}. Proposition 1 of \citet{Rhee1991} is the same as \citet{Homburg1991}, which is acknowledged in Footnote 3. Proposition \ref{prop:nobubble} is more general than these results.}

\begin{prop}\label{prop:nobubble}
Let $F(H,X)$ be a neoclassical production function, $n=\inf_t N_t$, and $N=\sup_t N_t$. If the endowment-output ratio $(N_te_t^y+N_{t-1}e_t^o)/F(N_t,1)$ is bounded above and the land share satisfies
\begin{equation}
    \inf_{H\in [n,N]}\frac{F_X(H,1)}{F(H,1)}>0, \label{eq:lb}
\end{equation}
then all equilibria are Pareto efficient and have no land bubble.
\end{prop}

Applying l'H\^opital's rule, any $f(x)=F(x,1)$ with $\lim_{x\to\infty}f'(x)=w>0$ (such as the CES production function with elasticity of substitution $\sigma>1$) satisfies $xf'(x)/f(x)\to 1$ as $x\to\infty$, and since
\begin{equation*}
    \frac{F_H(x,1)}{F(x,1)}=\frac{f(x)-xf'(x)}{f(x)}=1-\frac{xf'(x)}{f(x)},
\end{equation*}
it violates \eqref{eq:lb}. If we restrict $F$ to be the CES production function \eqref{eq:CES}, by log-differentiating $f(x)=F(x,1)$, we obtain
\begin{equation*}
    \frac{xf'(x)}{f(x)}=\frac{\alpha x^{1-1/\sigma}}{\alpha x^{1-1/\sigma}+1-\alpha}.
\end{equation*}
Hence if $0<n<N=\infty$, \eqref{eq:lb} holds if and only if $\sigma\le 1$, which leads to the following corollary.

\begin{cor}\label{cor:CES}
In an economy with land and economic growth, $\sigma>1$ is necessary for \idr.
\end{cor}

\subsection{Related literature}\label{subsec:discuss_literature}

The ``land efficiency'' result proposed by \citet{McCallum1987} and \citet{Homburg1991} seems to be well known.\footnote{See, for instance, \citet[p.~354]{EcksteinSternWolpin1988}, \citet[p.~329]{Grossman1991}, \citet[Theorem on p.~99]{Richter1993}, \citet[p.~526]{AllenGale1997}, \citet[p.~121]{CrettezLoupiasMichel1997}, \citet[p.~103]{Petersen1998}, \citet[p.~1544]{PingleTesfatsion1998}, \citet[p.~645]{ImrohorogluImrohorogluJoines1999}, \citet[p.~4]{Femminis2000}, \citet[Footnote 16]{Scholten2000}, 
\citet[Endnote 6]{Sinn2000}, \citet[p.~95]{DeatonLaroque2001}, \citet[p.~987]{Femminis2002}, \citet[p.~589]{Mountford2004}, \citet[Footnote 1]{Poutvaara2004}, \citet[Footnote 3]{AgnaniGutierrezIza2005}, \citet[Endnote 6]{deWalque2005}, \citet[p.~5]{GeerolfDynamicInefficiency}, \citet[p.~9]{CheKumarStauvermann2021}, \citet[Footnote 9]{DiBucchianico2023}, and \citet[p.~4]{KumarStauvermann2023}.} Despite this, it appears that the literature on \idr, including recent papers we have cited at the beginning of the introduction, has not paid sufficient attention to the result. In fact, none of those recent papers cite \citet{McCallum1987} or \citet{Homburg1991}. To our knowledge, our paper would be the first that studies the possibility of \idr, \ie, a Ponzi scheme, in an OLG economy with land.

With regard to the possibility of a Pareto inefficient equilibrium in an OLG model with land, our paper is related to \citet{Rhee1991}, though there are three substantial differences. First, \citet{Rhee1991}'s main economic question is dynamic inefficiency in an economy with land, which is indeed the title of his paper, not the possibility of \idr in an economy with land. Hence, he does not study whether \idr is possible or not by incorporating public debt. Second, \citet{Rhee1991} never studies the case where land price bubbles inevitably emerge, nor the possibility of \idr under such circumstances. In other words, our analyses in \S\ref{sec:endowment} and Proposition \ref{prop:necessity} in \S\ref{sec:land}, which are the central results in the present paper, are entirely absent in \citet{Rhee1991}. Third, \citet{Rhee1991}'s analysis is limited to providing a counterexample to \citet{McCallum1987} under high-level assumptions: he directly makes an assumption (Assumption A) on the growth rate on land rent, which is an endogenous object in the first place; he simply claims that the proofs of his Propositions 2 and 3 are equivalent to \citet{Tirole1985}, which itself contains several issues as recently pointed out by \citet{PhamTodaTirole}. In this sense, it would be fair to say that \citet{Rhee1991} made a good conjecture on the possibility of a Pareto inefficient equilibrium. Our analysis in \S\ref{subsec:land_inefficient} provides a firmer footing for his conjecture and makes the economic intuition clear for why a Pareto inefficient equilibrium may arise.\footnote{Note that an equilibrium with land could be inefficient with property taxes \citep{KimLee1997,Hellwig2020} or transaction costs \citep{Hellwig2022}. However, these frictions effectively make land a depreciating asset like physical capital (with depreciation rate corresponding to the property tax rate) and change one of the key characteristics of land \citet{McCallum1987} raises, \ie, non-depreciation. Our Proposition \ref{prop:inefficient} provides an example of a Pareto inefficient equilibrium, while keeping the key characteristics of land.}

\section{Conclusion}

Since \citet{McCallum1987}, it has been well known that in an overlapping generations (OLG) economy with land, the equilibrium is Pareto efficient because $R>G$ holds along the balanced growth path. This implies \idr (a Ponzi scheme) is impossible, even if it appears to be possible along balanced growth in a landless economy. The mere presence of land breaks this result. This paper has studied whether \idr is possible in an OLG economy with land. There are three messages to be drawn from our analyses.  

The central message is that one cannot ignore land in the analysis of infinite debt rollover because its mere existence fundamentally changes the economic condition under which \idr is possible or not. Even if the circumstance of $R<G$ arises and a Ponzi scheme appears to be possible on a balanced growth path in a landless economy, the relationship reverses from $R<G$ to $R>G$ once we introduce land. Therefore, we should be careful in deriving policy implications concerning \idr in landless economies. More generally, when the condition of $R<G$ matters for the economic analysis, we need to be careful whether the model includes land or not. Moreover, the existence of land also fundamentally changes the conventional view on \idr. In a land economy, if the economy falls into a sufficiently low interest rate environment, land price bubbles necessarily emerge as the equilibrium outcome, and Pareto efficiency is simultaneously achieved. Contrary to the conventional wisdom, despite Pareto efficiency, \idr is possible. 
 
The second message is that even in an OLG model with land, a Pareto inefficient equilibrium may arise. In addition to such an inefficient equilibrium, there can also exist a Pareto efficient equilibrium in which land price bubbles arise. Again, despite the achievement of Pareto efficiency, \idr is possible. We have derived these results in such a way that goes beyond \citet{McCallum1987}'s analysis; his analysis is correct in that the presence of land can generate $R>G$, Pareto efficiency, and $P_t=V_t$, but his analysis turns out to require a knife-edge restriction.

The third message is that in the economy with land, the dynamic path with \idr requires unbalanced growth where the productivities of different production factors grow at different rates, precisely because with land, $R>G$ holds on a balanced growth path, which makes \idr impossible. The approach with unbalanced growth provides a new perspective on macro-theory construction in the analysis of \idr.

\appendix

\section{Fundamental results on asset price bubbles}\label{sec:bubble}

In this appendix, we review fundamental results on asset price bubbles \citep{SantosWoodford1997,HiranoToda2024JME,HiranoToda2025JPE,HiranoToda2025EJW}.

Consider a deterministic infinite-horizon economy with time indexed by $t=0,1,\dotsc$. Let $q_t>0$ be the date-0 price (Arrow-Debreu price) of the date-$t$ good with normalization $q_0=1$. Suppose an asset pays dividend $D_t$ and trades at price $P_t>0$ at time $t$. The absence of arbitrage implies
\begin{equation*}
    q_tP_t=q_{t+1}(P_{t+1}+D_{t+1}).
\end{equation*}
Iterating over $t$, for any $T>t$ we obtain
\begin{equation}
    q_tP_t=\sum_{s=t+1}^T q_sD_s + q_TP_T. \label{eq:Pt_iter}
\end{equation}
Since $q_t>0$, $P_t\ge 0$, and $D_t\ge 0$ for all $t$, the partial sum $\set{\sum_{s=t+1}^T q_sD_s}$ is increasing in $T$ and bounded above by $q_tP_t$, so it is convergent. Therefore, letting $T\to\infty$ in \eqref{eq:Pt_iter} and dividing both sides by $q_t>0$, we obtain
\begin{equation}
    P_t=\frac{1}{q_t}\sum_{s=t+1}^\infty q_sD_s+\frac{1}{q_t}\lim_{T\to\infty}q_TP_T. \label{eq:Pt_lim}
\end{equation}
Comparing \eqref{eq:Pt_lim} and the fundamental value $V_t$ in \eqref{eq:Vt}, we have $P_t\ge V_t$ for all $t$. We say that the asset price contains a \emph{bubble} at time $t$ if $P_t>V_t$, or in other words, the asset price $P_t$ exceeds its fundamental value $V_t$ defined by the present value of dividends. Note that $P_t=V_t$ for all $t$ if and only if
\begin{equation}
    \lim_{T\to\infty} q_TP_T=0.\label{eq:nobubble}
\end{equation}
We refer to \eqref{eq:nobubble} as the \emph{no-bubble} condition.

The following Bubble Characterization Lemma plays a fundamental role in determining the existence of bubbles.

\begin{lem}[Bubble Characterization, \citealp{Montrucchio2004}, Proposition 7]\label{lem:bubble}
In any equilibrium, the asset price exhibits a bubble if and only if
\begin{equation}
    \sum_{t=0}^\infty \frac{D_t}{P_t}<\infty. \label{eq:bubble_char}
\end{equation}
\end{lem}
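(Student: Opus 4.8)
The plan is to track the discounted price $\pi_t\coloneqq q_tP_t$ directly, because by the definition of the no-bubble condition \eqref{eq:nobubble}, a bubble is present precisely when $\pi_\infty\coloneqq\lim_{T\to\infty}q_TP_T>0$. First I would record the monotonicity already latent in the derivation of \eqref{eq:Pt_lim}: the no-arbitrage identity $q_tP_t=q_{t+1}(P_{t+1}+D_{t+1})$ rewrites as $\pi_{t+1}=\pi_t-q_{t+1}D_{t+1}\le \pi_t$, so $\{\pi_t\}$ is non-increasing and bounded below by $0$, hence converges to some $\pi_\infty\ge 0$, and the asset exhibits a bubble if and only if $\pi_\infty>0$.

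The key computation is to express the one-step ratio of $\pi_t$ purely in terms of the dividend yield $D_t/P_t$. Using $q_{t+1}=q_t/R_t$ together with $P_{t+1}+D_{t+1}=R_tP_t$ from \eqref{eq:Rt1}, one obtains the clean identity
\[
    \frac{\pi_{t+1}}{\pi_t}=\frac{q_{t+1}P_{t+1}}{q_tP_t}=\frac{P_{t+1}}{P_{t+1}+D_{t+1}}=\frac{1}{1+D_{t+1}/P_{t+1}}.
\]
Telescoping from $\pi_0=q_0P_0=P_0>0$ then yields
\[
    \pi_t=P_0\prod_{s=1}^t\frac{1}{1+D_s/P_s},
\]
so that $\pi_\infty=P_0\big/\prod_{s=1}^\infty(1+D_s/P_s)$. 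Consequently $\pi_\infty>0$ if and only if the infinite product $\prod_{s=1}^\infty(1+D_s/P_s)$ converges to a finite limit.

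Finally I would invoke the standard criterion for infinite products with nonnegative terms. Since $D_s/P_s\ge 0$ (note $P_s>0$ for all $s$ makes each yield well defined), the elementary bounds $1+\sum_{s}a_s\le\prod_{s}(1+a_s)\le\exp(\sum_{s}a_s)$ with $a_s=D_s/P_s$ show that $\prod_{s\ge 1}(1+D_s/P_s)$ converges if and only if $\sum_{s\ge 1}D_s/P_s<\infty$. Because the omitted $s=0$ term $D_0/P_0$ is finite, summability of $\sum_{s\ge 1}$ is equivalent to summability of $\sum_{s\ge 0}$, which delivers the stated criterion \eqref{eq:bubble_char}.

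All steps are elementary, so I do not expect a genuine obstacle; the only points requiring care are the direction of the equivalence (a bubble corresponds to $\pi_\infty>0$, i.e.\ to a \emph{convergent} product, i.e.\ to a \emph{summable} yield, which is intuitive since a large bubble means the price dwarfs dividends) and the bookkeeping that $P_t>0$ throughout. The substantive insight is simply the telescoping identity $\pi_{t+1}/\pi_t=(1+D_{t+1}/P_{t+1})^{-1}$, which converts the limit of discounted prices into an infinite product of dividend yields and thereby reduces the whole statement to the classical product–sum convergence fact.
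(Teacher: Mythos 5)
Your proof is correct and takes essentially the same route as the paper, which does not spell out its own argument but defers to the simple proof in \citet[Lemma 1]{HiranoToda2025JPE}: that proof is precisely your telescoping identity $q_{t+1}P_{t+1}/(q_tP_t)=(1+D_{t+1}/P_{t+1})^{-1}$, so that $q_TP_T=P_0\prod_{t=1}^T(1+D_t/P_t)^{-1}$, combined with the classical fact that $\prod_t(1+a_t)$ converges for $a_t\ge 0$ if and only if $\sum_t a_t<\infty$. The bookkeeping points you flag (bubble $\iff \lim_T q_TP_T>0$, and $P_t>0$ so the yields are well defined) are exactly the right ones, and your handling of them is sound.
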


See \citet[Lemma 1]{HiranoToda2025JPE} for a simple proof of Lemma \ref{lem:bubble}. Lemma \ref{lem:bubble} states that a bubble emerges if and only if the price-dividend ratio $P_t/D_t$ grows sufficiently fast.

\section{Fundamental results on Pareto efficiency}\label{sec:efficient}

In this appendix, we review fundamental results that characterize Pareto efficiency in infinite-horizon economies.

Consider an infinite-horizon economy with time indexed by $t=0,1,\dotsc$ and agents indexed by $i\in I$, where $I$ is either a finite or countably infinite set. To ease the burden of notation, suppose that there is no uncertainty and let $q_t>0$ be the Arrow-Debreu price (with normalization $q_0=1$), although this assumption is inessential. Let $q=(q_t)_{t=0}^\infty$ be the price vector, $P_t$ the asset price, and $D=(D_t)_{t=0}^\infty$ the dividend stream. Let $e_i=(e_{it})_{t=0}^\infty$ be the endowment vector of agent $i$ and $x_{i0}$ be the endowment of the asset, where we normalize $\sum_{i\in I}x_{i0}=1$. Agent $i$ has a locally nonsatiated utility function $U_i$ over consumption $c_i=(c_{it})_{t=0}^\infty$. A competitive equilibrium consists of state prices $q$, initial asset price $P_t$, and consumption allocation $(c_i)$ such that $c_i$ maximizes utility $U_i$ subject to the date-0 budget constraint $q\cdot c_i\le q\cdot e_i+(P_0+D_0)x_{i0}$ and markets clear, so $\sum_i c_i=\sum_i e_i+D$.

The following result is an adaptation of the bubble impossibility result of \citet{SantosWoodford1997} (see also \citet[\S3.4]{HiranoToda2024JME}) and the standard proof of the first welfare theorem.

\begin{prop}\label{prop:finite_endowment2}
If in equilibrium the present value of aggregate endowment $q\cdot \sum_i e_i$ is finite, then there is no bubble (the asset price $P_0$ equals the present value of dividends $\sum_{t=1}^\infty q_tD_t$) and the allocation $(c_i)$ is Pareto efficient.
\end{prop}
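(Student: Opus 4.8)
The plan is to prove the two conclusions—no bubble and Pareto efficiency—in sequence, using a single device: a partial-sum telescoping identity together with a summation across agents. First I would establish the no-bubble conclusion. Recall from the main text that iterating the no-arbitrage condition gives, for the asset held in aggregate, the identity $P_0 = \sum_{t=1}^T q_t D_t + q_T P_T$ (this is \eqref{eq:Pt_iter} specialized to $t=0$ and without the leading $q_0$ normalization). To conclude $\lim_{T\to\infty} q_T P_T = 0$ I would argue that $q\cdot\sum_i e_i<\infty$ forces the individual date-$T$ wealth to be summable, so in particular $q_T$ times any individual's date-$T$ asset value vanishes in the tail; summing over the finitely-normalized asset holdings gives $q_T P_T\to 0$. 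Concretely, since aggregate consumption equals aggregate endowment plus dividends and both have finite present value (the dividend stream's present value is bounded by $P_0<\infty$), the present value of aggregate consumption is finite, which is what ultimately pins down the vanishing terminal term.

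Next I would prove Pareto efficiency by the contrapositive-style argument underlying the first welfare theorem, but carried out carefully at the level of present values to exploit the finiteness hypothesis. Suppose $(c_i')$ is a feasible allocation that Pareto dominates $(c_i)$: every agent weakly prefers $c_i'$ and at least one strictly prefers it. By local nonsatiation and utility maximization subject to the date-0 budget constraint, weak preference implies $q\cdot c_i'\ge q\cdot e_i+(P_0+D_0)x_{i0}$ for every $i$, with strict inequality for the agent who is strictly better off. Summing over $i\in I$—and here using $\sum_i x_{i0}=1$ so the asset terms aggregate to $P_0+D_0$—yields $q\cdot\sum_i c_i' > q\cdot\sum_i e_i + P_0+D_0$. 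But feasibility of $(c_i')$ gives $\sum_i c_i'=\sum_i e_i+D$, so the left side equals $q\cdot\sum_i e_i + q\cdot D$. Since the no-bubble conclusion already gives $P_0=\sum_{t=1}^\infty q_t D_t$, and $q\cdot D = D_0+\sum_{t=1}^\infty q_t D_t = D_0+P_0$, the two sides coincide, contradicting the strict inequality.

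The main obstacle, and the step requiring the most care, is the exchange of summation over the (possibly countably infinite) agent set with the inner products and the justification that adding the individual budget inequalities is legitimate when there are infinitely many agents. The finiteness of $q\cdot\sum_i e_i$ is precisely the hypothesis that rescues the classical argument: it guarantees that the aggregated present-value sums converge absolutely, so that $\sum_i q\cdot c_i'$ and $\sum_i q\cdot e_i$ are well-defined finite numbers and the term-by-term summation of the budget inequalities produces a valid inequality rather than the indeterminate $\infty>\infty$ discussed in the introduction. I would therefore spell out that each $q\cdot c_i'\ge 0$ and that their sum is dominated by $q\cdot\sum_i e_i + q\cdot D<\infty$, which legitimizes interchanging the order of summation over $i$ and $t$ by nonnegativity (Tonelli). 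Once this interchange is justified, the remaining algebra is the routine telescoping already recorded in \eqref{eq:Pt_iter}, so no further difficulty arises.
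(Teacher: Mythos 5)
Your efficiency argument is essentially the paper's own proof: revealed preference gives $q\cdot c_i'\ge q\cdot e_i+(P_0+D_0)x_{i0}$ with strict inequality for at least one agent, you sum across agents using $\sum_i x_{i0}=1$, and you use feasibility plus the finiteness hypothesis to ensure the aggregated present values are finite numbers, so that the strict inequality survives the (possibly infinite) summation and yields a contradiction. The care you take with the interchange of sums (Tonelli, nonnegativity) is exactly the point the paper flags with its qualifier ``with strict inequality if $q\cdot\sum_i c_i'<\infty$,'' so this half is fine and matches the paper's route.

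The no-bubble half is where you genuinely diverge, and where there is a gap. The proposition is stated in the abstract economy of Appendix \ref{sec:fundamental}, in which the asset enters \emph{only} through the initial wealth term $(P_0+D_0)x_{i0}$ of a single date-0 budget constraint: there are no date-$T$ asset prices $P_T$, no sequential budget sets, and no no-arbitrage relation. The identity \eqref{eq:Pt_iter} you lean on is derived from the first-order condition \eqref{eq:foc} of the OLG model in \S\ref{sec:endowment}; it is not a primitive here, so the object $q_TP_T$ whose limit you propose to compute is not even defined. Even granting the sequential OLG reading (which suffices for Proposition \ref{prop:finite_endowment1}, but not for the general statement), your crucial step---that finiteness of the present value of aggregate consumption forces $q_TP_T\to 0$---is asserted rather than proved: one needs the \citet{SantosWoodford1997}-type bound that the value of the asset at $T$ is dominated by the wealth, or the present value of subsequent consumption, of its holders, e.g.\ $q_TP_T=q_{T+1}(P_{T+1}+D_{T+1})\le q_{T+1}N_Tc_{T+1}^o$, whose right-hand side is a term of a convergent series. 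That inequality is the real content of the transversality argument, and it is missing. The paper's proof sidesteps all of this: local nonsatiation gives budget equality $q\cdot c_i=q\cdot e_i+(P_0+D_0)x_{i0}$; summing across agents and using market clearing $\sum_i c_i=\sum_i e_i+D$ gives $q\cdot\sum_ie_i+\sum_{t\ge 1}q_tD_t=q\cdot\sum_ie_i+P_0$; and cancelling the finite term $q\cdot\sum_ie_i$ yields $P_0=\sum_{t\ge1}q_tD_t$ directly, with no terminal condition and no sequential structure. If you wish to keep your transversality route, you must restrict attention to the OLG specialization and supply the missing bound; otherwise the aggregation argument is both shorter and valid at the proposition's level of generality.
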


\begin{proof}
Since $U_i$ is locally nonsatiated, the budget constraint holds with equality:
\begin{equation*}
    q\cdot c_i=q\cdot e_i+(P_0+D_0)x_{i0}.
\end{equation*}
Summing across $i$ and noting that $\sum_i x_{i0}=1$, we obtain
\begin{equation*}
    q\cdot \sum_i c_i=q\cdot \sum_i e_i+P_0+D_0.
\end{equation*}
Using the market clearing condition $\sum_i c_i=\sum_i e_i+D$, we obtain
\begin{equation*}
    q\cdot \left(\sum_i e_i+D\right)=q\cdot \sum_i e_i+P_0+D_0.
\end{equation*}
Noting that $q_0=1$ and hence $q_0D_0=D_0$, we obtain
\begin{equation*}
    q\cdot \sum_i e_i+\sum_{t=1}^\infty q_tD_t=q\cdot \sum_i e_i + P_0.
\end{equation*}
Since $q\cdot \sum_i e_i<\infty$, we have $P_0=\sum_{t=1}^\infty q_tD_t$, so there is no bubble.

To show that $(c_i)$ is Pareto efficient, suppose to the contrary that there is a feasible allocation $(c_i')$ that Pareto dominates it. Then by the principle of revealed preference, we have
\begin{equation*}
    q\cdot c_i'\ge q\cdot e_i+(P_0+D_0)x_{i0}
\end{equation*}
for all $i$, with at least one strict inequality. Summing across $i$, we obtain
\begin{equation*}
    q\cdot \sum_i c_i'\ge q\cdot \sum_ie_i+P_0+D_0,
\end{equation*}
with strict inequality if $q\cdot \sum_ic_i'<\infty$. But since $(c_i')$ is feasible and the present value of aggregate endowment is finite, it follows that
\begin{equation*}
    \infty > q\cdot \left(\sum_ie_i+D\right)>q\cdot \sum_ie_i+P_0+D_0.
\end{equation*}
Canceling $q\cdot \sum_ie_i<\infty$ from both sides and using $P_0=\sum_{t=1}^\infty q_tD_t$, we get the contradiction $0>0$.
\end{proof}

In the context of the model in \S\ref{sec:endowment}, we obtain the following corollary.

\begin{cor}\label{cor:finite_endowment}
Let $q_t$ be the date-0 price and $\set{a_t}$ be as in \eqref{eq:at}. If $\sum_{t=0}^\infty q_ta_t<\infty$, 
then the equilibrium is fundamental and Pareto efficient.
\end{cor}

\begin{proof}
Condition \eqref{eq:at} implies
\begin{equation*}
    \sum_{t=0}^\infty q_ta_t<\infty \iff \sum_{t=0}^\infty q_t(N_te_t^y+N_{t-1}e_t^o+r_t)<\infty.
\end{equation*}
The rest of the proof is essentially the same as Proposition \ref{prop:finite_endowment2} by letting the set of agents be $I=\set{-1,0,\dotsc}$, letting agent $i=-1$ be the initial old, agent $i=t\ge 0$ be generation $t$, and accounting for population growth.
\end{proof}

Proposition \ref{prop:finite_endowment2} is often sufficient to establish the efficiency of an equilibrium without bubbles. However, it is useless for studying bubbly equilibria that are potentially efficient, because it rules out inefficiency and bubbles simultaneously. For this purpose, we need more powerful tools.

In overlapping generations models, there is a well-known efficiency criterion called the \emph{Cass criterion}
\begin{equation}
    \sum_{t=0}^\infty \frac{1}{q_t}=\infty, \label{eq:Cass1}
\end{equation}
first introduced by \citet{Cass1972} for the neoclassical growth model and adapted to other settings by \citet{Benveniste1976}, \citet{BalaskoShell1980}, \citet{OkunoZilcha1980}, and among others.

Let $U_i$ be the utility function of agent $i$, which we assume to be strictly quasi-concave. Let $(c_i)$ and $(c_i')$ be two feasible allocations and $d_i\coloneqq c_i'-c_i$ be the difference. If $(c_i')$ Pareto dominates $(c_i)$, then by definition we have $U_i(c_i+d_i)=U_i(c_i')\ge U_i(c_i)$ for all $i$ with a strict inequality for at least one $i$. Since $(c_i)$ and $(c_i')$ are both feasible and $U_i$ is strictly quasi-concave, for any $\epsilon\in (0,1]$, the allocation $(c_i+\epsilon d_i)$ is feasible and $U_i(c_i+\epsilon d_i)\ge U_i(c_i)$ for all $i$ with a strict inequality for at least one $i$. This argument shows that to find a Pareto improvement over $(c_i)$, we may focus on feasible allocations that are arbitrarily close to $(c_i)$.

Let $U:\R_{++}^L\to \R$ be a general utility function, where $L$ is the number of commodities ($L=2$ in the two-period OLG model with homogeneous goods). We recall that $U$ is called \emph{differentiably strictly increasing} if it has positive partial derivatives, \ie, $\nabla U(c)\gg 0$ for all $c\in \R_{++}^L$. Furthermore, if $U$ is twice continuously differentiable, we say that $U$ is \emph{differentiably strictly quasi-concave} if for all $c\in \R_{++}^L$ and $0\neq v\in \R^L$, we have
\begin{equation}
    \seq{\nabla U(c),v}=0\implies \seq{v,\nabla^2U(c)v}<0, \label{eq:dsqconcave}
\end{equation}
where $\seq{\cdot,\cdot}$ denotes the inner product in $\R^L$. Intuitively, differentiably strictly quasi-concavity implies that upper contour sets are convex and their boundaries are nowhere ``flat'': see \citet[p.~14, Definition 29]{VillanacciCarosiBenevieriBattinelli2002} and \citet[p.~155, Proposition 11.11(ii)]{TodaEME}.

In the two-good case, such utility functions imply that indifference curves are downward sloping and have convex graphs (\ie, the second derivative is strictly positive). To see why, write $c=(y,z)$ and consider the indifference curve $U(y,z)=\text{constant}$. Setting $z=\phi(y)$ and differentiating, we obtain
\begin{equation*}
    U_1+U_2\phi'(y)=0\iff \phi'(y)=-\frac{U_1}{U_2}<0.
\end{equation*}
Differentiating once more, we obtain
\begin{align}
    & U_{11}+2U_{12}\phi'(y)+U_{22}(\phi'(y))^2+U_2\phi''(y)=0 \notag \\
    \iff & \phi''(y)=-\frac{U_{11}-2U_{12}(U_1/U_2)+U_{22}(U_1/U_2)^2}{U_2}=-\frac{\seq{v,\nabla^2 U v}}{U_2}>0 \label{eq:curvature}
\end{align}
for $v=(1,-U_1/U_2)$ using \eqref{eq:dsqconcave}.

We can show that if we can uniformly bound the ``elasticity'' of indifference curves, then any equilibrium that satisfies the Cass criterion is Pareto efficient.

\begin{prop}\label{prop:Cass2}
Consider the model of \S\ref{sec:endowment} with constant population ($N_t=1$) and bounded endowments ($\sup_t(e_t^y+e_t^o+r_t)<\infty$). Suppose each utility function $U_t$ is twice continuously differentiable, differentiably strictly increasing, and differentiably strictly quasi-concave. Let $\set{(c_t^y,c_t^o)}_{t=0}^\infty$ be an interior allocation such that $c_t^y+c_t^o=e_t^y+e_t^o+r_t$ and $q_t$ be the corresponding date-0 price. Let $z=\phi_t(y)$ be the indifference curve of generation $t$ obtained by solving $U_t(y,z)=U_t(c_t^y,c_{t+1}^o)$ and suppose there exists $\mu>0$ such that the elasticity satisfies
\begin{equation}
    -\frac{y\phi_t''(y)}{\phi_t'(y)}>2\mu \label{eq:mu}
\end{equation}
uniformly over $t$ and $y$ in a neighborhood of $c_t^y$. If the Cass criterion \eqref{eq:Cass1} holds, then $\set{(c_t^y,c_t^o)}_{t=0}^\infty$ is Pareto efficient.
\end{prop}

\begin{proof}
The proof is essentially the same as Theorem 3A of \citet{OkunoZilcha1980}, which builds on \citet{Benveniste1976}.

We prove the contrapositive. Suppose $\set{(c_t^y,c_t^o)}_{t=0}^\infty$ is Pareto inefficient and let $\set{(y_t,z_t)}_{t=0}^\infty$ be a feasible Pareto improvement. By the definition of $\phi_t$ and applying Taylor's theorem, we have
\begin{equation*}
    z_{t+1}\ge \phi_t(y_t)=c_{t+1}^o+\phi_t'(c_t^y)(y_t-c_t^y)+\frac{1}{2}\phi_t''(c_t^y+\theta(y_t-c_t^y))(y_t-c_t^y)^2
\end{equation*}
for some $\theta\in (0,1)$. By an earlier remark, without loss of generality we may assume that $(y_t,z_t)$ is arbitrarily close to $(c_t^y,c_t^o)$. Hence using \eqref{eq:mu}, we obtain
\begin{equation*}
    z_{t+1}\ge c_{t+1}^o+\phi_t'(c_t^y)(y_t-c_t^y)-\frac{\phi_t'(c_t^y)}{c_t^y}\mu (y_t-c_t^y)^2.
\end{equation*}
Individual optimality implies $\phi_t'(c_t^y)=-q_t/q_{t+1}$, so
\begin{equation}
    q_t(y_t-c_t^y)+q_{t+1}(z_{t+1}-c_{t+1}^o)\ge \frac{\mu}{q_tc_t^y}[q_t(y_t-c_t^y)]^2. \label{eq:OZ1}
\end{equation}
Equation \eqref{eq:OZ1} corresponds to Equation (1) in \citet{OkunoZilcha1980}.\footnote{This equation contains a typographical error: $p_t^*(t)(c_t-c_t^*)$ on the right-hand side should be $p_t^*(t)(c_t^*-c_t)$.}

Note that because generation $t$ weakly prefers $(y_t,z_{t+1})$ to $(c_t^y,c_{t+1}^o)$, by revealed preference the left-hand side of \eqref{eq:OZ1} is nonnegative. The fact that indifference curves have uniformly bounded elasticities implies that we can bound the left-hand side from below by a quadratic term. Noting that endowments are bounded from above, by redefining $\mu$ if necessary, we obtain
\begin{equation}
    q_t(y_t-c_t^y)+q_{t+1}(z_{t+1}-c_{t+1}^o)\ge \frac{\mu}{q_t}[q_t(y_t-c_t^y)]^2, \label{eq:OZ2}
\end{equation}
\ie, we may drop $c_t^y$ in the denominator of \eqref{eq:OZ1}. Define $(\delta_t,\epsilon_t)\coloneqq (q_t(y_t-c_t^y),q_t(z_t-c_t^o))$. By feasibility, we have $y_t+z_t\le e_t^y+e_t^o+r_t=c_t^y+c_t^z$, so
\begin{equation}
    \delta_t+\epsilon_t\le 0. \label{eq:de1}
\end{equation}
Since $\set{(y_t,z_t)}_{t=0}^\infty$ Pareto improves over $\set{(c_t^y,c_t^o)}_{t=0}^\infty$, by revealed preference we have $\epsilon_0\ge 0$ and
\begin{equation}
    \delta_t+\epsilon_{t+1}\ge 0, \label{eq:de2}
\end{equation}
with at least one strict inequality. Combining these inequalities, we have
\begin{align*}
    \delta_{t+1}-\delta_t&=(\delta_{t+1}+\epsilon_{t+1})-(\delta_t+\epsilon_{t+1})\le 0,\\
    \epsilon_{t+1}-\epsilon_t&=(\delta_t+\epsilon_{t+1})-(\delta_t+\epsilon_t)\ge 0,
\end{align*}
so
\begin{equation}
    \delta_t\le \dots \le \delta_0\le -\epsilon_0\le 0\le \epsilon_0\le \dots \le \epsilon_t. \label{eq:de3}
\end{equation}
If $\epsilon_t=0$ for all $t$, then $0\ge \delta_t=\delta_t+\epsilon_{t+1}\ge 0$ for all $t$, which is a contradiction to a Pareto improvement. Therefore $\epsilon_t>0$ for at least one $t$, and because $\set{\epsilon_t}$ is monotone, we have $\epsilon_t>0$ for large enough $t$, say $t\ge t_0$. Similarly, $\delta_t\le -\epsilon_t<0$ for $t\ge t_0$. Using \eqref{eq:OZ2} and $-\delta_t\ge \epsilon_t\ge 0$ from \eqref{eq:de1} and \eqref{eq:de3}, we obtain
\begin{equation*}
    \epsilon_{t+1}\ge -\delta_t+\frac{\mu}{q_t}\delta_t^2\ge \epsilon_t+\frac{\mu}{q_t}\epsilon_t^2,
\end{equation*}
which is positive for $t\ge t_0$. Let $e_t\coloneqq \epsilon_t/q_t=z_t-c_t^o>0$ for $t\ge t_0$. Then
\begin{equation*}
    q_{t+1}e_{t+1}\ge q_te_t+\mu q_te_t^2.
\end{equation*}
Taking the reciprocal yields
\begin{equation*}
    \frac{1}{q_{t+1}e_{t+1}}\le \frac{1}{q_te_t(1+\mu e_t)}=\frac{1}{q_te_t}-\frac{\mu}{1+\mu e_t}\frac{1}{q_t}.
\end{equation*}
Noting that $e_t=z_t-c_t^o$ is bounded from above (because aggregate endowment is bounded), we can take $e>0$ such that $e_t\le e$ for all $t$. Therefore,
\begin{equation*}
    \frac{1}{q_{t+1}e_{t+1}}\le \frac{1}{q_te_t}-\frac{\mu}{1+\mu e}\frac{1}{q_t}.
\end{equation*}
Taking the sum from $t=t_0$ to $t=T$, we obtain
\begin{equation*}
    \frac{\mu}{1+\mu e}\sum_{t=t_0}^T \frac{1}{q_t}\le \frac{1}{q_{t_0}e_{t_0}}-\frac{1}{q_{T+1}e_{T+1}}\le \frac{1}{q_{t_0}e_{t_0}}.
\end{equation*}
Letting $T\to\infty$, we obtain $\sum_{t=t_0}^\infty (1/q_t)<\infty$, so the Cass criterion \eqref{eq:Cass1} fails.
\end{proof}

\begin{rem}\label{rem:elasticity}
The condition \eqref{eq:mu} can be easily satisfied in special settings. For instance, it suffices that generations have identical preferences ($\phi_t$ does not depend on $t$), the equilibrium allocation is uniformly bounded away from zero, and indifference curves through the equilibrium allocation does not approach the boundary of $\R_+^2$ when restricted to feasible allocations. (See the proof of Theorem \ref{thm:necessity}.) \citet[p.~805]{OkunoZilcha1980} provide a counterexample to Proposition \ref{prop:Cass2} if condition \eqref{eq:mu} is violated.
\end{rem}

\begin{rem}
Although we assumed a two-period OLG model with a single commodity in each period, this assumption is not essential. \citet{Benveniste1986} explains how we can obtain an analog of \eqref{eq:OZ1} in a general setting.
\end{rem}

\section{Proofs}\label{sec:proof}

\subsection{Proof of Theorem \ref{thm:necessity}}

To derive the Cass criterion for the model in \S\ref{sec:endowment} with arbitrary population, we detrend the original economy (denoted $\cE$) and reduce to a hypothetical economy (denoted $\tilde{\cE}$) with no population growth and bounded endowments. Let the sequence $\set{a_t}$ satisfy \eqref{eq:at}. Define the endowment and consumption in $\tilde{\cE}$ by
\begin{align*}
    (\tilde{e}_t^y,\tilde{e}_t^o)&\coloneqq (N_te_t^y/a_t,N_{t-1}e_t^o/a_t),\\
    (\tilde{c}_t^y,\tilde{c}_t^o)&\coloneqq (N_tc_t^y/a_t,N_{t-1}c_t^o/a_t).
\end{align*}
Dividing the resource constraint \eqref{eq:clear_c} by $a_t$ yields
\begin{equation*}
    \tilde{c}_t^y+\tilde{c}_t^o=\tilde{e}_t^y+\tilde{e}_t^o+d_t,
\end{equation*}
where $d_t\coloneqq r_t/a_t$ is detrended dividend. Eliminating $(c_t^y,c_{t+1}^o)$ from the utility function yields
\begin{equation*}
    U_t(c_t^y,c_{t+1}^o)=U_t((a_t/N_t)\tilde{c}_t^y,(a_{t+1}/N_t)\tilde{c}_{t+1}^o)\eqqcolon \tilde{U}_t(\tilde{c}_t^y,\tilde{c}_{t+1}^o).
\end{equation*}
Using \eqref{eq:Rt1}, the gross risk-free rates in $\cE$ and $\tilde{\cE}$ are related as
\begin{equation*}
    \tilde{R}_t\coloneqq \frac{\tilde{U}_{t,1}}{\tilde{U}_{t,2}}(\tilde{c}_t^y,\tilde{c}_{t+1}^o)=\frac{a_t}{a_{t+1}}\frac{U_{t,1}}{U_{t,2}}(c_t^y,c_{t+1}^o)=\frac{a_t}{a_{t+1}}R_t.
\end{equation*}
Therefore, the date-0 prices are related as
\begin{equation*}
    \tilde{q}_t\coloneqq \frac{1}{\tilde{R}_0\dotsb \tilde{R}_{t-1}}=\frac{a_t}{a_0}\frac{1}{R_0\dotsb R_{t-1}}=\frac{a_t}{a_0}q_t.
\end{equation*}
Consequently, the Cass criterion \eqref{eq:Cass1} for $\tilde{\cE}$ is equivalent to
\begin{equation}
    \sum_{t=0}^\infty \frac{1}{q_ta_t}=\infty, \label{eq:Cass2}
\end{equation}
which we refer to as the Cass criterion for $\cE$. See \citet[Theorems 5a, 5b]{GeanakoplosPolemarchakis1991} for condition \eqref{eq:Cass2}.

With this definition, we can connect the Cass criterion (hence efficiency) to asset price bubbles as follows.

\begin{lem}\label{lem:Cass}
Let $q_t$ be the date-0 price and $\set{a_t}$ be as in \eqref{eq:at}. In any asymptotically bubbly equilibrium, the Cass criterion \eqref{eq:Cass2} holds.
\end{lem}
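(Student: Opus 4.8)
The plan is to reduce the divergence of $\sum_t 1/(q_t a_t)$ to an elementary observation: the date-0 value of the asset, $b_t \coloneqq q_t P_t$, is monotone along any equilibrium, hence bounded, and asymptotic bubbliness transfers this bound from $P_t$ to $a_t$. First I would record the monotonicity. The no-arbitrage relation $q_tP_t = q_{t+1}(P_{t+1}+D_{t+1})$ used to derive \eqref{eq:Pt_iter} can be rewritten as $b_t = b_{t+1} + q_{t+1}D_{t+1}$. Since $q_{t+1}>0$ and $D_{t+1}\ge 0$, the sequence $\set{b_t}$ is non-increasing, so $b_t \le b_0 = q_0P_0 = P_0 < \infty$ for every $t$ (here $P_0>0$ because $D_t>0$ infinitely often). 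Equivalently, $1/b_t = 1/(q_tP_t) \ge 1/P_0 > 0$ for all $t$.

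Next I would invoke the hypothesis of asymptotic bubbliness, namely $\liminf_{t\to\infty}P_t/a_t>0$. This produces an $\epsilon>0$ and an index $T_0$ such that $P_t \ge \epsilon a_t$, i.e.\ $a_t \le P_t/\epsilon$, for all $t\ge T_0$. Combining this with the previous bound, for every $t\ge T_0$ I get
\[
  \frac{1}{q_t a_t} \ge \frac{\epsilon}{q_t P_t} = \frac{\epsilon}{b_t} \ge \frac{\epsilon}{P_0} > 0.
\]
Since the summands are bounded below by a fixed positive constant for all $t\ge T_0$, the tail $\sum_{t\ge T_0} 1/(q_t a_t)$ diverges, and therefore $\sum_{t=0}^\infty 1/(q_t a_t) = \infty$, which is exactly the Cass criterion \eqref{eq:Cass2}.

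The argument is short, and the only genuinely substantive step is the monotonicity of $b_t = q_t P_t$: once one sees that the date-0 value of the asset can only decrease over time (by exactly the discounted dividend), the detrended price staying bounded away from zero forces $1/(q_t a_t)$ to stay bounded away from zero as well, trivializing the divergence. I do not expect to need the Bubble Characterization Lemma \ref{lem:bubble} or the precise value of $\lim_{T\to\infty}q_TP_T$; asymptotic bubbliness enters only through the bound $a_t \le P_t/\epsilon$. The one point to handle carefully is that $\epsilon$ and $T_0$ come from a $\liminf$, so the lower bound on the summands holds only eventually—but that is harmless, since discarding finitely many nonnegative terms does not affect whether the series diverges.
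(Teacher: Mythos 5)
Your proof is correct, and it shares the paper's core decomposition---writing $1/(q_ta_t)$ as the product $\bigl(1/(q_tP_t)\bigr)\cdot\bigl(P_t/a_t\bigr)$ and bounding each factor---but it bounds the first factor in a genuinely different way, and the difference has some content. The paper invokes the \emph{bubbly} half of the definition of an asymptotically bubbly equilibrium: since the equilibrium is bubbly, $\lim_{t\to\infty}q_tP_t>0$ exists, so $1/(q_tP_t)$ converges to a finite positive number and $\liminf_{t\to\infty}1/(q_ta_t)=\bigl(\lim_{t\to\infty}1/(q_tP_t)\bigr)\bigl(\liminf_{t\to\infty}P_t/a_t\bigr)>0$. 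You never use the positivity of that limit: you only use the no-arbitrage recursion $q_tP_t=q_{t+1}P_{t+1}+q_{t+1}D_{t+1}\ge q_{t+1}P_{t+1}$, valid in \emph{every} equilibrium, to get the uniform bound $q_tP_t\le q_0P_0=P_0$ and hence $1/(q_tP_t)\ge 1/P_0>0$. As a result your argument establishes a slightly stronger statement: the Cass criterion \eqref{eq:Cass2} holds in any equilibrium whose detrended price satisfies $\liminf_{t\to\infty}P_t/a_t>0$, bubble or not. This extra generality is not vacuous---a fundamental equilibrium whose dividends keep pace with $a_t$ can have $P_t/a_t$ bounded away from zero, since $P_t\ge\epsilon a_t$ together with \eqref{eq:at} only makes $D_t/P_t$ bounded, not summable, so Lemma \ref{lem:bubble} does not force a bubble. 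Both proofs are complete and equally short; yours is marginally more elementary (it needs neither the existence of $\lim_{T\to\infty}q_TP_T$ nor its positivity), while the paper's is tailored exactly to the hypothesis as stated.
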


\begin{proof}
Take any asymptotically bubbly equilibrium. By definition (see Appendix \ref{sec:bubble}), $\lim_{t\to\infty}q_tP_t>0$ exists and $\liminf_{t\to\infty}P_t/a_t>0$. Therefore
\begin{equation*}
    \liminf_{t\to\infty}\frac{1}{q_ta_t}=\left(\lim_{t\to\infty}\frac{1}{q_tP_t}\right)\left(\liminf_{t\to\infty}\frac{P_t}{a_t}\right)>0,
\end{equation*}
so \eqref{eq:Cass2} holds.
\end{proof}

While the Cass criterion is not sufficient for Pareto efficiency, it is almost so (Proposition \ref{prop:Cass2}). Thus, Lemma \ref{lem:Cass} implies that asymptotically bubbly equilibria are almost efficient.

\begin{proof}[Proof of Theorem \ref{thm:necessity}]
The claim that all equilibria are asymptotically bubbly follows from \citet[Theorem 2]{HiranoToda2025JPE}. However, because population grows but endowments converge in our setting, while population is constant but endowments grow in \citet{HiranoToda2025JPE}, we provide a formal correspondence.

Let $\tilde{\cE}$ be an economy with constant population (mass 1) with endowment, consumption, asset holdings, and utility
\begin{align*}
    (\tilde{e}_t^y,\tilde{e}_{t+1}^o)&=(G^te_t^y,G^te_{t+1}^o),\\
    (\tilde{c}_t^y,\tilde{c}_{t+1}^o)&=(G^tc_t^y,G^tc_{t+1}^o),\\
    \tilde{x}_t&=G^tx_t,\\
    \tilde{U}_t(y,z)&=U(y/G^t,z/G^t).
\end{align*}
Multiplying the budget constraints \eqref{eq:budget} of the original economy $\cE$ by $G^t$, we obtain the budget constraints of $\tilde{\cE}$. Using $N_t=G^t$, the market clearing condition \eqref{eq:clear} for $\cE$ is identical to that for $\tilde{\cE}$. By the definition of $\tilde{U}_t$, we have
\begin{equation*}
    \tilde{U}_t(\tilde{c}_t^y,\tilde{c}_{t+1}^o)=U(\tilde{c}_t^y/G^t,\tilde{c}_{t+1}^o/G^t)=U(c_t^y,c_{t+1}^o).
\end{equation*}
Since the utility maximization problems in $\cE$ and $\tilde{\cE}$ are equivalent, there is a one-to-one correspondence between equilibria of $\cE$ and $\tilde{\cE}$.

To prove that all equilibria of $\cE$ are asymptotically bubbly, it suffices to prove it for $\tilde{\cE}$. We verify the assumptions of Theorem 2 of \citet{HiranoToda2025JPE} (henceforth HT). By Assumption \ref{asmp:U}, $\tilde{U}_t$ is continuous, quasi-concave, continuously differentiable, and has positive partial derivatives, so Assumption 1 of HT holds. By Assumption \ref{asmp:G}, young's endowment growth satisfies $\tilde{e}_{t+1}^y/\tilde{e}_t^y\to G$ as $t\to\infty$ and the endowment ratio satisfies $\tilde{e}_t^o/\tilde{e}_t^y\to e^o/(Ge^y)\ge 0$, so Assumption 2 of HT holds. The forward rate function (reciprocal of marginal rate of substitution) satisfies
\begin{equation*}
    \tilde{f}_t(y,z)\coloneqq \frac{\tilde{U}_{t,1}}{\tilde{U}_{t,2}}(G^ty,G^tz)=\frac{U_1}{U_2}(y,z),
\end{equation*}
which is independent of $t$, so Assumption 3 of HT clearly holds.\footnote{In HT, $a_t$ is young's endowment, while we scale by $G^t$ here, but the distinction is clearly unimportant.} Condition \eqref{eq:necessity} corresponds to condition (20) of HT. Therefore, by Theorem 2 of HT, all equilibria are asymptotically bubbly.

We next show the Pareto efficiency of the equilibrium. By Lemma \ref{lem:Cass}, the Cass criterion \eqref{eq:Cass2} holds. Hence by Proposition \ref{prop:Cass2}, it suffices to check the elasticity condition \eqref{eq:mu}. Let $p_t=P_t/G^t$ and $d_t=r_t/G^t$ be the detrended asset price and dividend. In equilibrium, the young holds $1/N_t=1/G^t$ shares of the asset, so $p_t$ equals savings. The consumption of generation $t$ is therefore
\begin{equation*}
    (c_t^y,c_{t+1}^o)=(e_t^y-p_t,e_{t+1}^o+G(p_{t+1}+d_{t+1})).
\end{equation*}
By Assumption \ref{asmp:G}, we have $(e_t^y,e_t^o)\to (e^y,e^o)$ and $d_t\to 0$. Since the equilibrium is asymptotically bubbly, we can take $p>0$ such that $\liminf_{t\to\infty} p_t\ge p$. The budget constraint of the young also implies $\limsup_{t\to\infty}p_t\le e^y$. Now let the young purchase half of the equilibrium asset holdings, which is obviously feasible (and suboptimal). This trade will result in consumption
\begin{equation*}
    (e_t^y-p_t/2,e_{t+1}^o+G(p_{t+1}+d_{t+1})/2),
\end{equation*}
which is asymptotically bounded below by $(e^y/2,e^o+Gp/2)\gg 0$. By Assumption \ref{asmp:U}, indifference curves do not touch the boundary of $\R_+^2$. Since consumption is bounded above, any Pareto improving allocation over $(c_t^y,c_{t+1}^o)$ is contained in a compact subset of $\R_{++}^2$. By minimizing the elasticity (the left-hand side of \eqref{eq:mu}) over this compact set, we obtain a strictly positive number, so condition \eqref{eq:mu} holds.
\end{proof}

\subsection{General production function}\label{sec:ES}

Let $F(H,X)$ be a general neoclassical production function. The elasticity of substitution is defined by the percentage change in relative inputs with respect to a percentage change in relative price:
\begin{equation*}
    \sigma\coloneqq -\frac{\partial \log (H/X)}{\partial \log (w/r)},
\end{equation*}
where $w=F_H$ and $r=F_X$. Let $h=\log(H/X)$ and $\sigma(h)$ be the elasticity of substitution corresponding to $h$. Then
\begin{equation*}
    \frac{\partial}{\partial h}\frac{w}{r}=-\frac{1}{\sigma(h)}.
\end{equation*}
Integrating $h_0$ to $h$ and applying the intermediate value theorem for integrals, we obtain
\begin{equation*}
    \log \frac{w}{r}(h)-\log \frac{w}{r}(h_0)=-\frac{1}{\sigma((1-\theta)h_0+\theta h)}(h-h_0)
\end{equation*}
for some $\theta\in (0,1)$. Suppressing the dependence of $\sigma$ on $h$, we obtain
\begin{equation}
    \frac{w}{r}(h)=\frac{w}{r}(h_0)(H/H_0)^{-1/\sigma}, \label{eq:wrH}
\end{equation}
where we recall $h=\log (H/X)$ and $h_0=\log (H_0/X)$. Therefore, if
\begin{equation*}
    w\coloneqq \lim_{H\to\infty}F_H(H,1)=\lim_{H\to\infty}\frac{F(H,1)}{H}>0,
\end{equation*}
it follows from \eqref{eq:wrH} shows that
\begin{equation*}
    r=F_X(H,1)\sim F_X(H_0,1)(H/H_0)^{1/\sigma}
\end{equation*}
for large $H$. Thus, if the elasticity of substitution is bounded away from 1 at high input level $H$, then the rent $r$ grows slower than the input $H$, which generalizes the analysis in \S\ref{subsec:land_model}. For an analysis along these lines, see \citet[\S3]{HiranoToda2025PNAS}.

\subsection{Proof of Proposition \ref{prop:efficient}}

If \eqref{eq:p} holds, then we have $e^y>0$ in \eqref{eq:ey2}. Therefore, there exists $t_0\ge 0$ such that $e_t^y>0$ for $t\ge t_0$. We then obtain an equilibrium starting at $t_0$ by the argument preceding Proposition \ref{prop:efficient}. Since $r_t\sim rG^{t/\sigma}$ and $P_t=pG^t$ with $\sigma>1$, we have $\sum_{t=t_0}^\infty r_t/P_t<\infty$. By Lemma \ref{lem:bubble}, the equilibrium is bubbly, and it is asymptotically bubbly because $P_t/G^t=p>0$. Pareto efficiency follows from Lemma \ref{lem:Cass} and Proposition \ref{prop:Cass2}.

The natural interest rate in \eqref{eq:necessity_land} is given by
\begin{equation*}
    R\coloneqq \frac{1}{\beta}\left(\frac{e^o}{e^y+w}\right)^\gamma=\frac{1}{\beta}\left(\frac{e^o}{(e^o+Gp)(\beta G)^{-1/\gamma}+p}\right)^\gamma,
\end{equation*}
where we have used \eqref{eq:ey2}. Making $R$ less than $G^{1/\sigma}$ and solving for $e^o$, we obtain \eqref{eq:necessity2}. By Proposition \ref{prop:necessity}, all equilibria are asymptotically bubbly and Pareto efficient. \hfill \qedsymbol

\subsection{Proof of Proposition \ref{prop:difficulty}}

Regardless of $\sigma=1$ or $G=1$, under the maintained assumptions, using \eqref{eq:wr}, there exist constants $e^y,e^o\ge 0$ and $w,r>0$ such that
\begin{equation*}
    (e_t^y,e_t^o,w_t,r_t)=(e^yG^{(\alpha-1)t}, e^oG^{(\alpha-1)t}, wG^{(\alpha-1)t}, rG^{\alpha t}).
\end{equation*}
In a balanced growth path equilibrium, we must have $P_t=pG^{\alpha t}$ for some $p>0$. Using \eqref{eq:yz1}, the equilibrium consumption of generation $t$ is
\begin{equation*}
    (y_t,z_{t+1})=(e^y+w-p,e^oG^{\alpha-1}+(p+r)G^\alpha)G^{(\alpha-1)t}.
\end{equation*}
Therefore, the gross risk-free rate \eqref{eq:Rt1} reduces to
\begin{equation}
    R=\frac{1}{\beta}\left(\frac{e^oG^{\alpha-1}+(p+r)G^\alpha}{e^y+w-p}\right)^\gamma=\left(1+\frac{r}{p}\right)G^\alpha. \label{eq:R_bp}
\end{equation}
Let
\begin{equation*}
    f(p)\coloneqq \frac{1}{\beta}\left(\frac{e^oG^{\alpha-1}+(p+r)G^\alpha}{e^y+w-p}\right)^\gamma-\left(1+\frac{r}{p}\right)G^\alpha.
\end{equation*}
Then $f$ is continuous and strictly increasing in $p$. Since $f(0)=-\infty$ and $f(e^y+w)=\infty$, by the intermediate value theorem, there exists a unique $p\in (0,e^y+w)$ such that $f(p)=0$. This $p$ satisfies \eqref{eq:R_bp}, so there exists a unique balanced growth path equilibrium.

In this equilibrium, by \eqref{eq:R_bp} we have $R=(1+r/p)G^\alpha>G^\alpha$, so the interest rate exceeds the economic (output) growth rate and infinite debt rollover is impossible. Setting $q_t=1/R^t$ and $a_t=G^{\alpha t}$ in Corollary \ref{cor:finite_endowment}, the equilibrium is fundamental and Pareto efficient. \hfill \qedsymbol

\subsection{Proof of Proposition \ref{prop:inefficient}}

If \eqref{eq:eo} holds, then we have $e^y>0$ in \eqref{eq:ey}. Therefore, there exists $t_0\ge 0$ such that $e_t^y>0$ for $t\ge t_0$. We then obtain an equilibrium starting at $t_0$ by the argument preceding Proposition \ref{prop:inefficient}. Since $r_t/G^{t/\sigma}\to r$ and $P_t/G^{t/\sigma}=p$, we have $r_t/P_t\to r/p>0$ as $t\to\infty$. Since $\sum_{t=t_0}^\infty r_t/P_t=\infty$, by Lemma \ref{lem:bubble} the equilibrium is fundamental.

We next show that the equilibrium is Pareto efficient if and only if \eqref{eq:R>=G} holds. By \eqref{eq:rt_CES}, we have
\begin{equation*}
    r_tG^{-t/\sigma}=A\left(\alpha+(1-\alpha)G^{-(1-1/\sigma)t}\right)^\frac{1}{\sigma-1}(1-\alpha)\ge r
\end{equation*}
always. If \eqref{eq:R>=G} holds, then
\begin{equation}
    R_t=G^{1/\sigma}\left(1+\frac{r_{t+1}}{p}G^{-\frac{t+1}{\sigma}}\right)\to G^{1/\sigma}\left(1+\frac{r}{p}\right)=R\label{eq:Rlim}
\end{equation}
and $R_t\ge R\ge G$ for all $t$. Since $1/q_t=R_0\dotsb R_{t-1}\ge R^t\ge G^t$, it follows that
\begin{equation*}
    \sum_{t=0}^\infty \frac{1}{q_tG^t}\ge \sum_{t=0}^\infty (R/G)^t=\infty
\end{equation*}
and the Cass criterion \eqref{eq:Cass2} holds. By the same argument as the proof of Theorem \ref{thm:necessity}, the equilibrium is Pareto efficient.

Conversely, suppose \eqref{eq:R>=G} fails and hence $R<G$. Suppose we tax the young at time $t$ by $\epsilon>0$, which generates a tax revenue of $N_t\epsilon=G^t\epsilon$. If we transfer this tax revenue to the old, each old receives $N_t\epsilon/N_{t-1}=G\epsilon$. As $t\to\infty$, the equilibrium consumption allocation $(c_t^y,c_{t+1}^o)=(y_t,z_{t+1})$ converges to $(e^y+w,e^o)\gg 0$, so this transfer is feasible for small enough $\epsilon>0$. Using \eqref{eq:Rt2}, the first-order effect of this transfer on generation $t$'s utility is
\begin{align*}
    \left.\frac{\partial}{\partial \epsilon}U(c_t^y-\epsilon,c_{t+1}^o+G\epsilon)\right\rvert_{\epsilon=0}&=-U_1(y_t,z_{t+1})+GU_2(y_t,z_{t+1})\\
    &\to (G-R)U_2(e^y+w,e^o)\coloneqq D>0
\end{align*}
as $t\to\infty$, where the last line follows from $(y_t,z_{t+1})\to (e^y+w,e^o)$ and \eqref{eq:Rlim}.

Let $f_t(\epsilon)\coloneqq U(c_t^y-\epsilon,c_{t+1}^o+G\epsilon)$. Take $\delta\in (0,D)$. Since $f_t'(0)\to D$ as $t\to\infty$ and $f_t$ is continuously differentiable, we can take $T\ge t_0$ large enough and $\epsilon>0$ small enough such that $f_t'(\epsilon)>D-\delta>0$ for $t\ge T$. Since $f_t$ is concave, we have
\begin{equation*}
    f_t(0)-f_t(\epsilon)\le f_t'(\epsilon)(0-\epsilon) \iff \frac{f_t(\epsilon)-f_t(0)}{\epsilon}\ge f_t'(\epsilon)>D-\delta,
\end{equation*}
so $f_t(\epsilon)>f_t(0)$ for $t\ge T$. Therefore, by implementing the transfer of $\epsilon$ for dates $t\ge T$ only, we can make a Pareto improvement. \hfill \qedsymbol

\subsection{Proof of Proposition \ref{prop:nobubble}}

Since $F$ is neoclassical (increasing, concave, and homogeneous of degree 1), we have
\begin{equation*}
    F(H,X)=HF_H(H,X)+XF_X(H,X),
\end{equation*}
so $F_H(x,1)=f'(x)>0$ and $F_X(x,1)=f(x)-xf'(x)>0$. Therefore, $xf'(x)/f(x)\in (0,1)$. By \eqref{eq:lb}, we can take $\delta\in (0,1)$ such that
\begin{equation}
    \sup_{x\in [n,N]}\frac{xf'(x)}{f(x)}\le 1-\delta. \label{eq:ub}
\end{equation}

Take any equilibrium. Let
\begin{align*}
    w_t&=F_H(N_t,1)=f'(N_t)>0,\\
    r_t&=F_X(N_t,1)=f(N_t)-N_tf'(N_t)>0
\end{align*}
be the wage and rent. Let $q_t>0$ be the date-0 price, $V_0=\sum_{t=1}^\infty q_tr_t\le P_0<\infty$ the fundamental value of land, and $a_t=f(N_t)$ the output. By assumption,
\begin{equation*}
    \frac{N_te_t^y+N_{t-1}e_t^o+f(N_t)}{a_t}=1+\frac{N_te_t^y+N_{t-1}e_t^o}{f(N_t)}
\end{equation*}
is bounded above and bounded away from 0, so \eqref{eq:at} holds. Since $F$ is neoclassical, we have $a_t=w_tN_t+r_t$. By \eqref{eq:ub}, we have
\begin{equation*}
    1=\frac{w_tN_t+r_t}{a_t}\ge \frac{r_t}{a_t}\ge \delta.
\end{equation*}
Therefore,
\begin{equation*}
    \sum_{t=0}^\infty q_ta_t\le \frac{1}{\delta}\sum_{t=0}^\infty q_tr_t=\frac{1}{\delta}(r_0+V_0)<\infty.
\end{equation*}
By Corollary \ref{cor:finite_endowment}, the equilibrium is fundamental and Pareto efficient. \hfill \qedsymbol
	
\printbibliography
	
\end{document}